\documentclass[runningheads]{llncs}
\usepackage[T1]{fontenc}
\usepackage{graphicx}
\usepackage{todonotes}

\usepackage{algorithm}
\usepackage{algpseudocode}
\usepackage{mathrsfs}
\usepackage{mathtools}
\usepackage{pifont} 

\usepackage{tikz}
\usetikzlibrary{decorations.markings}
\usetikzlibrary{graphs,arrows.meta,decorations.pathreplacing}
\usetikzlibrary{positioning}
\usepackage{amsmath}
\usepackage{amssymb}
\usepackage{dirtytalk}
\usepackage{nicematrix}

\newcommand{\historyset}{\mathcal{H}^*(\mathbb{P}, \mathbb{T}, \mathbb{O}, \mathbb{V})}
\newcommand{\infhistoryset}{\mathcal{H}^\omega(\mathbb{P}, \mathbb{T}, \mathbb{O}, \mathbb{V})}

\newcommand{\rb}{\xrightarrow{rb}}
\newcommand{\ssrel}{\approx_{ss}}
\newcommand{\opa}{a}
\newcommand{\opb}{b}
\newcommand{\opc}{c}
\newcommand{\opd}{d}
\newcommand{\ope}{e}
\newcommand{\opf}{f}

\newcommand{\oph}{h}

\newcommand{\dsucp}[1]{\succrl{#1}{rb}}
\newcommand{\dsuc}{\dsucp{}}
\newcommand{\patharrow}{\!\mathrel{\rlap{\hskip 0.55em $\rightarrow$}\rightarrow}\;}
\newcommand{\algoneset}{\mathcal{G}^*}

\newcommand{\algone}{\textsl{Alg}}

\newcommand{\ar}{\xrightarrow{ar}}
\newcommand{\vis}{\xrightarrow{vis}}
\newcommand{\visc}{\xrightarrow{visc}}
\newcommand{\visrb}{\xrightarrow{visrb}}

\tikzset{
  withcaps/.style={
    postaction={
      decorate,
      decoration={
        markings,
        mark=at position 0 with {\draw[-] (0pt,-5pt) -- (0pt,5pt);},
        mark=at position 1 with {\draw[-] (0pt,-5pt) -- (0pt,5pt);}
      }
    }
  }
}

\newcommand{\eqdef}{\stackrel{\text{def}}{=}}
\newcommand{\varopa}{\mathsf{a}}
\newcommand{\varopb}{\mathsf{b}}
\newcommand{\varopc}{\mathsf{c}}
\newcommand{\arpred}{\stackrel{\mathsf{ar}}{\longrightarrow}}
\newcommand{\vispred}{\stackrel{\mathsf{vis}}{\longrightarrow}}
\newcommand{\succrl}[2]{\mathrel{\xrightharpoonup{#2}}_{#1}}
\newcommand{\logrl}[1]{\xrightarrow{\mathsf{#1}}}
\newcommand{\sessionorder}{\xrightarrow{so}}
\newcommand{\Enc}{\mathsf{Enc}}

\newcommand{\Types}{\mathbb{T}}
\newcommand{\Values}{\mathbb{V}}
\newcommand{\Objects}{\mathbb{O}}
\newcommand{\succs}{\mathsf{succs}}

\begin{document}

\title{HistMSO: a Logic for Reasoning about Consistency Models with MONA}

\author{Isabelle Coget\inst{1} \and
Etienne Lozes\inst{2}\orcidID{0000-0001-8505-585X}}

\authorrunning{I. Coget and E. Lozes}

\institute{Institut Polytechnique de Paris, Palaiseau, France \and
Université Côte d'Azur, CNRS, I3S, France}

\maketitle

\begin{abstract}
Reasoning about consistency models for replicated data systems is a challenging task that requires a deep understanding of both the consistency models themselves and a large part of human inputs in mechanized verification approaches.

In this work, we introduce an approach to reasoning about consistency models for replicated data systems. 
We introduce HistMSO, a monadic second-order logic (MSO) for histories and abstract executions, the formal models of executions of replicated data systems introduced by Burckhardt. 
We show that HistMSO can express 39 out of 42 consistency models from Viotti and Vukolic hierarchy. 
Moreover, we develop a method for reducing HistMSO satisfiability and model-checking to the same problems for MSO over words. 
While doing this, we leverage the MONA tool for automated reasoning on consistency models.

\keywords{Replicated data systems \and Consistency models \and Automated verification}
\end{abstract}

\section{Introduction\label{sec:intro}}
Replicated data systems consist of multiple agents (or processes) that maintain copies (replicas) of shared data objects. 
These agents perform operations, such as reads and writes, on the replicas. Due to network delays and the asynchronous nature of distributed systems, these operations may not be immediately visible to all agents, leading to potential inconsistencies among replicas. 
The replicated data system abstraction is widely used in various applications, including distributed databases, cloud storage systems, or fully decentralized storage systems such as blockchains or peer-to-peer networks. To some extent, consistency models are also related with weak memory models employed in hardware design~\cite{Burckhardt13}.
An important aspect of replicated data systems is that they often prioritize availability and partition tolerance over strong consistency guarantees. 
Since the strongest consistency model, linearisability~\cite{HerlihyW90}, is too costly to implement in many practical applications, replicated data systems often implement weaker consistency models (often called \emph{isolation level} in the database community). Monotonic reads, monotonic writes, read my writes, eventual consistency, etc are among the most popular ones of the lush jungle of consistency models. Viotti and Vukoli\'{c}~\cite{ViottiVukolic} inventoried 42 consistency models in 2016, and formally established the complex hierarchy they form (see Figure~1 in~\cite{ViottiVukolic}).

Following Burckhardt~\cite{PrinciplesOfEventualConsistency}, a consistency model is a property that an execution of a replicated data system must satisfy. The model of execution of a replicated data system introduced by Burckhardt, called a \emph{history}, bases on a collection of \emph{operations}, each located on a precise process. Each operation also lasts on a time interval that may partially overlap the time interval of another operation located on a different process. 
Like a Gantt diagram, a history therefore combines a partial order on operations and a continuous model of time. The history model is however often too simple for grounding a consistency model. 
Indeed, several consistency models rely on a logical justification of the way how the operations that compose a given history are related to each others. This justification may contradict the partial order and the timings of the operations. Burckhardt~\cite{PrinciplesOfEventualConsistency} therefore also introduced \emph{abstract executions}, that enrich histories with two binary relations among operations, called \emph{visibility} and \emph{arbitration}. 

In this work, we address the problem of automated reasoning on consistency models. Due to the combination of partial orderness and continuous time we just mentioned, it is not obvious which verification models, techniques and tools are the most appropriate for histories and abstract executions (see discussion in conclusion).
Our proposal in this work is to leverage MONA~\cite{MONA},  and more generally  MSO-to-automata translations, for automated reasoning on consistency models. MONA models, which are words over a finite alphabet, do not have a lot in common with histories. As a consequence, this requires a significant work for encoding the latter in the former. The choice of MONA also enforces some finiteness limitations, like a bounded, fixed, number of processes, or finitely many values (see conclusion for a detailed discussion on the limitations), which might be compensated by further applications of the automata-based approach (see also discussion in conclusion).

This work stems from an attempt to draw a parallel between communication models and consistency models. This work indeed is partly influenced by previous works of the second author on tree decompositions of message sequence charts for some weakly-synchronous communication models~\cite{DBLP:journals/pacmpl/GiustoFLL23}. Taking a step back in abstraction, we may address the question of automated reasoning on consistency model by transferring the tree decomposition techniques developed for message-passing concurrency to the realm of replicated data systems.

To sum up, we make the following contributions.
\begin{enumerate}
    \item We introduce HistMSO, the monadic second order logic of histories and abstract executions.
    \item We show that HistMSO can express 39 out of the 42 consistency models studied by Viotti and Vukoli\'{c}~\cite{ViottiVukolic}; remarkably, this result is quite straightforward, and follows from a translation of the meta logic used by Viotti and Vukoli\'{c} into HistMSO, based on rather standard MSO encoding techniques for expressing transitive closures, finiteness of sets defined by set comprehension, etc. 
    \item We show that the HistMSO theory of histories is decidable, by reducing the satisfiability problem of HistMSO to the one of MSO over infinite words. Moreover, our reduction also leverages MONA~\cite{MONAguide} as a practical theorem prover for the (weak) HistMSO theory of finite histories.
    \item We introduce $k$-transient visibility, a restriction on the visibility relation, and we show that the HistMSO theory of real-time, $k$-transient abstract executions is decidable.
    \item We show that the partial order defined by the timings of the operations of a given history is the transitive closure of a graph whose cutwidth is bounded by the square of the number of processes; since the treewidth is bounded by the cutwidth, this result induces in particular a tree decomposition of histories bounded by the number of processes and draws a parallel with the second author's previous work.
\end{enumerate}

\paragraph{Outline} Section~\ref{sec:background} recalls the models of histories and abstract executions. In Section~\ref{sec:histmso} we introduce HistMSO logic and show on a few representative examples how to axiomatise the majority of consistency models studied by Viotti and Vukolic. Section~\ref{sec:translation-to-mona} develops the translation of HistMSO to MONA, first for histories, then for $k$-transient real-time abstract executions. Section~\ref{sec:cutwidth} constructs a graph from an history whose transitive closure captures the returns before partial order over operations, and establishes a bound on the cutwidth of the graph that only depends on the number or processes.

\subsubsection{Related works}
Logical axiomatisations of communication models can be found in~\cite{DBLP:conf/fm/ChevrouH0Q19,DBLP:journals/pacmpl/GiustoFLL23,ENGELS2002253}.
Recent works on histories try to remove the arbitration relation from abstract executions~\cite{DBLP:journals/corr/abs-2510-21304}.
MONA was already applied to a wide range of applications, including reactive systems~\cite{KlaNieSun:casestudyautver}, pointer programs~\cite{JenJoeKlaSch:AutVerPoinProgMonSecOrdLog}, hardware verification~\cite{BasKla:BeyondFiniteHardware}, or parsing~\cite{YakYak99}.
There are several other automated verification techniques and tools that could have been considered instead of MONA, each with its own drawbacks (MONA being the finiteness assumptions we already mentioned).
Several model-checking tools are well suited for discrete time, with an interleaving semantics of parallelism (like SPIN~\cite{holzmann2007design} or TLA+~\cite{yu1999model}, to quote a very few), but they miss continuous-time.
Timed automata~\cite{alur1994theory} and the model-checking tools based on them~\cite{bengtsson1995uppaal,bozga1998kronos}, on the other hand, do not handle any partial order beyond the total order of events defined by their timings.
Automated theorem provers~\cite{kovacs2013first,weidenbach2009spass} and SMT solvers~\cite{de2008z3} are quite versatile, but with a limited support of quantifiers, specifically when quantifying over a binary relation. 
Interval Temporal Logics aim at automated reasoning on Gantt diagrams, but the strong theoretical results developed in this research community are not so well tools supported. 
That said, it is unclear to us whether some of these techniques, and the recent advances in each of these research communities, could not compete with our MONA-based approach, either on satisfiability and automated reasoning, or on runtime, distributed monitoring and model-checking.

\section{Background\label{sec:background}}

\begin{figure}[t]
\begin{center}
\resizebox{0.5\textwidth}{!}{
\begin{tikzpicture}[x=2cm, y=1cm,every node/.style={font=\large}]

	\node[anchor=east] at (-0.1, 0) {\small \textit{Proc. 1}};  % Adjust the x-coordinate as needed
  \node[anchor=east] at (-0.1, -1) {\small \textit{Proc. 2}}; % Label for second set  

  % First line of bars
  \draw[withcaps] (0,0) -- (2,0) node[midway, above] {$\opa$};
  \draw[dashed] (2,0) -- (2.5,0);
  \draw[withcaps] (2.5,0) -- (3,0) node[midway, above] {$\opb$};
  \draw[dashed] (3,0) -- (3.5,0);
  \draw[withcaps] (3.5,0) -- (4,0) node[midway, above] {$\opc$};

  % Second line of bars 
  \draw[dashed] (0,-1) -- (0.25,-1); 
  \draw[withcaps] (0.25,-1) -- (0.6,-1) node[midway, above] {$\opd$};
  \draw[dashed] (0.6,-1) -- (1,-1);
  \draw[withcaps](1,-1) -- (2.75,-1) node[midway, above] {$\ope$};
  \draw[dashed] (2.75,-1) -- (3.75,-1);
  \draw[withcaps](3.75,-1) -- (4,-1) node[midway, above] {$\opf$};

\end{tikzpicture}
}
\end{center}
\caption{\small \slshape A history on two processes, with operations $\opa$, $\opb$, $\opc$, $\opd$, $\ope$ and $\opf$}
\label{fig:ex-history}
\end{figure}
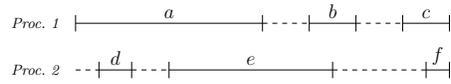

Tracking the sequence of computed actions in replicated data systems can be achieved through various methods. In this paper, we base our approach on the notion of histories defined in \cite{ViottiVukolic}. A history of a program execution is defined as a set of \textsl{operations} that represent all the actions performed during the computation. 

\subsubsection{Histories} can be visualized as timelines, where each process is represented by a horizontal line aligned to a global clock (see Figure~\ref{fig:ex-history}). The execution of an operation is shown as a solid segment, while dashed segments indicate periods when the process is idle. These timelines are read from left to right. 
 Each operation has the following attributes: the process running the operation, the invocation time, the return time, the type\footnote{following \cite{ViottiVukolic} terminology; however, it might be more accurate to think about it as the \emph{method name} of the replicated object; in this paper we consider that operations are of type either $read$ or $write$.} of operation, the object on which the operation takes effect, the input value of the operation and the output value of the operation. 
Since our objective is to encode these histories as words over a finite alphabet, for analysis using MONA, we make the assumption that the non-temporal attributes belong to finite sets of possible values. Without this finiteness assumption, the encodings presented in this paper would only work with an infinite alphabet.

Thereby we assume a set $\mathbb{P}$ of processes from which operations are launched, a set $\mathbb{T}$ of types of operations, and a set $\mathbb{O}$ of objects, which are the data to be modified or read by operations (objects can be registers or variables for instance). Also, we define the set of values as $\mathbb{V}\cup\{\nabla,\Phi\}$ where $\mathbb{V}$ is the set of values that can be read or written on objects. Symbol $\Phi$ represents a non-relevant or not needed value. For instance, we can claim that for a write operation, the input value is the value to be written, and the output value is $\Phi$; as a write operation does not necessarily need to return anything. Finally, $\nabla$ stands for the \say{output value} of an operation that never returns. Concretely, $\nabla$ represents the absence of a return value due to a never ending operation. 

As explained above we assume $\mathbb{P},\mathbb{T},\mathbb{O},\mathbb{V}$ finite.

Invocation and return times are defined in $\mathbb{R}_{>0}\cup \{+\infty\}$. We include $+\infty$ to represent operations that never return. Assigning $+\infty$ as a return time reflects that, beyond a certain point, the operation is considered to never terminate; invocation time however is always finite.
These timestamps refer to an ideal and global notion of time that we use to reason
about histories a posteriori, although it may not be accessible by processes during executions.

We recall the attributes from \cite{ViottiVukolic} along with their respective domains. The set of operation attributes is 
$$
\mathsf{Attributes} = \{proc, stime, rtime, type, obj, ival, oval\}
$$
An operation $o$ is thus defined as a record containing these attributes, with values taken from the following domains 
(throughout the paper, we use the notation $o.a$ to denote the value of a given attribute $a$ of an operation $o$):

\begin{itemize}
	\item $o.proc\in \mathbb{P}$ is the process on which the operation $o$ is executed
	\item $o.stime\in \mathbb{R}^+$ is the invocation time of $o$
	\item $o.rtime\in \mathbb{R}^+\cup \{+\infty\}$ is the return time of $o$, $+\infty$ if the operation never returns (we assume that for any operation $o$, it holds that $o.stime<o.rtime$)
	\item $o.type\in \mathbb{T}$ is the type of operation $o$
	\item $o.obj\in \mathbb{O}$ is the object on which the operation $o$ takes effect
	\item $o.ival\in \mathbb{V}$ is the input value of $o$, $\Phi$ if the operation is a $read$
	\item $o.oval\in \mathbb{V}$ is the output value of $o$, $\Phi$ if the operation is a write, and $\nabla$ if the operation does not return
\end{itemize}

Two operations $\opa$ and $\opb$ are said to be \emph{concurrent} if their execution times overlap, $i.e.$ if $\opa.stime\leq\opb.rtime$ and $\opb.stime\leq\opa.rtime$. Otherwise, the operations are \emph{sequential}. Note that if an operation never returns, it is concurrent with all operations that start after its invocation time. We assume in this paper that two concurrent operations cannot run on the same process.\footnote{This is important for the translation of histories to words over a \underline{finite} alphabet we present later. We could however relax this assumption by bounding the number of concurrent operations per process, which would serve modeling multi-threaded processes or operation/method calls inside an operation.}
We say that $\opa$ \emph{returns-before} $\opb$,  denoted as $\opa\rb\opb$, if $\opa$ returns before $\opb$ starts, i.e  $\opa.rtime<\opb.stime$. 
We say that $\opa$ is in the \emph{same session} as $\opb$, denoted as $\opa\ssrel\opb$, if $\opa$ and $\opb$ are running on the same process, i.e. $\opa.proc=\opb.proc$; we also say that $\opa$ and $\opb$ are in the \emph{session order}, $\opa\sessionorder\opb$, if $\opa\ssrel\opb$ and $\opa\rb\opb$.

\begin{example}
  In Figure~\ref{fig:ex-history}, operations $a$ and $d$ are concurrent, whereas operation $a$ returns before operation $f$; operations $a$ and $b$ are in the same session.
\end{example}

We say that two operations $\opa$ and $\opb$ are extremity-disjoint if $\opa.stime$, $\opa.rtime$, $\opb.stime$ and $\opb.rtime$ are pairwise distinct. 

\begin{definition}[History]\label{def:history}
  A history over $\mathbb{P}$, $\mathbb{T}$, $\mathbb{O}$, and $\mathbb{V}$ is a finite set $H$ of operations such that for every pair of distinct operations $\opa,\opb\in H$, (1) $\opa$ and $\opb$ are extremity-disjoint, and (2) if $\opa\ssrel \opb$, then $\opa$ and $\opb$ are sequential.
\end{definition}

A history is finite if it contains a finite number of operations.
The set of finite histories over $\mathbb{P}$, $\mathbb{T}$, $\mathbb{O}$, and $\mathbb{V}$ is denoted by $\historyset$.
An $\omega$-history is an infinite countable set of operations with no \emph{Zeno behavior}\footnote{We could also relax this assumption and translate arbitrary infinite histories to transfinite words indexed by ordinals, see e.g~\cite{demri-nowak-05-transfinite-words}.}: for all time $t\in \mathbb{R}^+$, there are finitely many operations that start before time $t$.
The set of $\omega$-histories over $\mathbb{P}$, $\mathbb{T}$, $\mathbb{O}$, and $\mathbb{V}$ is denoted by $\infhistoryset$.

In Section~\ref{sec:graphs-finite-histories}, we will study the oriented graph $(H,\rb)$ of the operations of a history $H$. The $\rb$ relation will be "too redundant" for the purpose of bounding some complexity measure of the graph, there we will consider a more restrictive relation, which we call \textsl{direct succession}.

\begin{definition}[Direct Successor]\label{def:successor}
Let $H\in\historyset$ and $\mathrel{\xrightarrow{rel}}$ a binary relation over $H$. We say that an operation $\opb$ is the direct successor of $\opa$ on process $p$, written $\opa \succrl{p}{rel} \opb$, if $\opb$ is the first operation on process $p$ such that $\opa\mathrel{\xrightarrow{rel}} \opb$. 
\end{definition}

Note that the relation $\opa\dsucp{p}\opb$ holds information about the process of $\opb$ (namely, $p$), but does not reveal any information about the process on which $\opa$ is executed. Additionally, we say that $\opb$ is a direct successor of $\opa$, written $\opa\dsuc\opb$, if there exists a process $p$ such that $\opa \dsucp{p} \opb$.

\subsubsection{Abstract executions} are an enrichment of histories with two relations over operations: arbitration and visibility. Observe that a history does not contain enough information about operations to assess its consistency. Indeed, if two operations are concurrently executed, there is no way of knowing which operation should be considered as "logically before" the other. Thus, histories are traditionaly enriched with two relations over operations: arbitration and visibility.\footnote{
Quoting \cite{PrinciplesOfEventualConsistency}, page~46:
\emph{Visibility tells us about the relative timing of update propagation and operations. It is an acyclic relation. If an operation $\opa$ is visible to $\opb$ (written $\opa\vis\opb$), it means that the effect of $\opa$ is visible to the client performing $\opb$. In a system where updates are communicated by messages, this may mean that the message about operation $\opa$ reached the client that performed operation $\opb$ before the operation $\opb$ was performed.
 Arbitration is used to indicate how the system resolves update conflicts, $i.e.$ how it handles concurrent updates that do not commute. It is a total order on operations.  If an operation $\opa$ is arbitrated before $\opb$ (written $\opa\ar\opb$), it means that the system considers the operation $\opa$ to happen earlier than operation $\opb$.}
}

\begin{figure}[t]
\begin{center}
\begin{tikzpicture}[x=2cm, y=1cm,every node/.style={font=\large}]

	\node[anchor=east] at (-0.1, 0) {\small \textit{Proc. 1}};  % Adjust the x-coordinate as needed
  \node[anchor=east] at (-0.1, -1) {\small \textit{Proc. 2}}; % Label for second set  

  % First line of bars
  \draw[withcaps] (0,0) -- (2,0) node[midway, above] {\tiny x.write(42)};
  \draw[dashed] (2,0) -- (2.5,0);
  \draw[withcaps] (2.5,0) -- (3,0) node[midway, above] {\tiny x.read()$\to$17};
  \draw[dashed] (3,0) -- (3.5,0);
  \draw[withcaps] (3.5,0) -- (4,0) node[above, pos=0.7] {\tiny x.read()$\to$42};
  % Second line of bars 
  \draw[dashed] (0,-1) -- (0.25,-1); 
  \draw[withcaps] (0.25,-1) -- (0.6,-1) node[midway, below] {\tiny y.write(true)};
  \draw[dashed] (0.6,-1) -- (1,-1);
  \draw[withcaps](1,-1) -- (2.75,-1) node[midway, below] {\tiny  x.write(17)};
  \draw[dashed] (2.75,-1) -- (3.75,-1);
  \draw[withcaps](3.75,-1) -- (4,-1) node[midway, below] {\tiny x.read()$\to$42};

  % Arrows for program order
  \draw[->, thick] (1.5, -1) -- node[above, sloped] {\tiny $ar$} (1.6, 0); 
  \draw[->, thick] (2.5, -1) -- node[above, sloped, pos=0.4] {\tiny $vis$} (2.6, 0); 
  \draw[->, thick] (1.8, 0) -- node[above, sloped] {\tiny $vis$} (3.85, -1); 
  \draw[thick] (1.7, 0) edge [->, bend left=30] node[above, sloped] {\tiny $vis$} (3.6, 0); 

\end{tikzpicture}
\end{center}
\caption{\small \slshape An abstract execution over the history of Figure~\ref{fig:ex-history}, with arbitration and visibility relations}
\label{fig:ex-abstract-execution}
\end{figure}
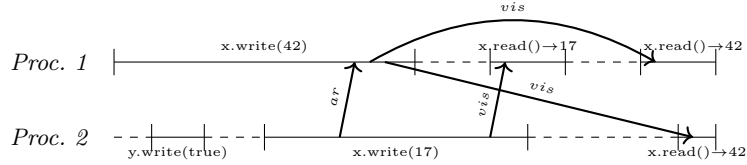

\begin{definition}[Abstract execution~\cite{ViottiVukolic}]\label{def:abstract-execution}
  An abstract execution is given by a triple \mbox{$(H,\ar,\vis)$} where $H$ is a history, $\ar$ is an arbitration relation, and $\vis$ is a visibility relation such that:
  \begin{itemize}
    \item $\ar$ is a total order over $H$;
    \item $\vis$ is an acyclic relation\footnote{with no look-ahead, see Definition~\ref{def:no-look-ahead} below} over $H$.    
  \end{itemize}
\end{definition}

\begin{example}
  In Figure~\ref{fig:ex-abstract-execution}, we illustrate an abstract execution over the history of Figure~\ref{fig:ex-history}. The arbitration relation $\ar$ is represented by arrows going from an operation to another one. We only depict an arrow $a\ar b$ if $a$ and $b$ are on different processes and if $b$ is the first operation after $a$ in the arbitration order on process $p = b.proc$ (we later write $\opa \succrl{p}{ar} \opb$ when this is the case). The visibility relation $\vis$ is represented similarly, as well as the direct succession relation $\opa\succrl{p}{vis}\opb$.
\end{example}

Note that Burckhardt's model of abstract executions is quite general, and does not make any assumption on the relations between $\ar$, $\vis$, and $\rb$ beyond the ones explicitly stated in Definitions~\ref{def:history} and~\ref{def:abstract-execution}. To conclude this section, we try to hint why abstract executions are defined in this way, and introduce a new property, absence of look-ahead, that seems to us rather reasonable to assume for all abstract executions.

\begin{remark}\label{rem:arbitration-visibility}
  Visibility does not imply arbitration. This property is sometimes enforced by some consistency models (for instance, by the $\mathsf{SingleOrder}$ property used to define linearizability, see below), but it is not the case for all consistency models. Arbitration, that corresponds to a global consensus on the order of operations, may deviate from visibility in some scenarios: for instance, with three processes and arbitration based on majority, if two processes perform \texttt{x.write(0)} and the third one performs \texttt{x.write(1)}, they may collectively arbitrate \texttt{x.write(1)} before the two \texttt{x.write(0)} in order to reflect majority voting and "drop" the \texttt{x.write(1)}, even if both \texttt{x.write(0)} were visible to the third process when he performed \texttt{x.write(1)}.
\end{remark}

\begin{remark}
  Visibility does not imply returns-before. Indeed, an operation can be visible to another operation even if it has not yet returned. In Figure~\ref{fig:ex-abstract-execution}, for instance, \texttt{x.write(17)} on process 2 is visible to \texttt{x.read(17)} on process 1, even if \texttt{x.write(17)} has not yet returned when \texttt{x.read(17)} started. Intuitively, process 2 sent a message about \texttt{x.write(17)} to process 1 before \texttt{x.write(17)} finished, and this message reached process 1 before \texttt{x.read(17)} finished.
\end{remark}

\begin{remark}
  At a process-local scale, visibility does not imply session order. This is weird if we stick to the intuition that visibility is related to message passing, because it would mean a process can receive a message from itself that it will send in the future. If we do not stick to the message passing intuition, we can think about visibility in a broader sense, and allow processes to rely on some look-ahead on the operations they will execute. 
\end{remark}

For simplicity, we later exclude this look-ahead behavior\footnote{we could probably cope with a finite look-ahead, but this would complicate the translation to MSO, and it is unclear how relevant it would be} and assume the following property (that expresses the absence of look-ahead at the process-local scale and further restricts visibility accross distinct processes, seen as a message passing relation, to allow receiving messages that are sent in the future).

\begin{definition}[No look-ahead]\label{def:no-look-ahead}
  We say that an abstract execution $(H,\ar,\vis)$ satisfies the no look-ahead property if for all operations $\opa$ and $\opb$, if $\opa\rb\opb$, then $\opb\not\vis\opa$.
\end{definition}

In the rest of the paper, we implicitly assume that this property is part of Definition~\ref{def:abstract-execution}.
We conclude with two further remarks about the fact that $\ar$, $\vis$, and $\rb$ relations are independent of each other.

\begin{remark}
  Session order (and more generally, $\rb$) does not imply visibility. For instance, in Figure~\ref{fig:ex-abstract-execution}, on process 1, \texttt{x.write(42)} is not visible to \texttt{x.read(17)}.
  This would however be enforced by consistency models such as $\mathsf{ReadYourWrites}$, that require that an operation is visible to all subsequent operations in the same session.
\end{remark}

\begin{remark}
  Session order (and more generally, $\rb$) does not imply arbitration. This is the case for instance when a write operation is dropped by the system, arbitrating it before other writes, see discussion in Remark~\ref{rem:arbitration-visibility} above.
\end{remark}

\section{A Logical Formalism for Consistency Models\label{sec:histmso}}
In this section we introduce HistMSO, a logic that allows to express properties of 
histories and abstract executions in distributed systems. 
We first define the syntax of the logic, then we show how to express consistency models in this logic.
\subsection{Definition of HistMSO\label{sec:definition-logic}}

We define HistMSO, the Monadic Second-Order (MSO) logic of histories and abstract executions as expected. MSO logic is the restriction of second-order logic where second-order quantifications are limited to quantifications over sets (of operations). Formally, the grammar of the MSO formulas $\phi$ over histories and abstract executions is the following, with $\varopa$ and $\varopb$ ranging over first order variables (interpreted as operations), and $A$ ranging over second order variables (interpreted as sets of operations),
and $attr$ ranges over operation attributes.

$$
\begin{array}{ccl}
    t & ::= & \varopa.stime \mid \varopa.rtime \\
    f & ::= & \varopa.attr=\varopb.attr  \mid  t < t \mid \varopa.proc = p\\
    g & ::= & \varopa.ival=\Phi \mid \varopa.oval=\Phi \mid \varopa.oval=\nabla\\
    \phi & ::= & f \mid g \mid \phi\vee\phi \mid \neg \phi \mid \forall \varopa.\ \phi \mid \forall A.\ \phi \mid \varopa \in A \\
            & \mid & \varopa \arpred \varopb \mid \varopa \vispred \varopb 
\end{array}
$$

The formulas $\phi_1\wedge \phi_2$ (conjunction), $\phi_1\Rightarrow \phi_2$ (implication), $\exists \varopa.\ \phi$ (first order quantification), 
guarded quantification $\forall \varopa \in X.\ \varphi$,
and so on are defined as macros as expected.
Formulas that do not contain $\vispred$ and $\arpred$ are interpreted over histories, while formulas that do contain these relation symbols are interpreted over abstract executions. The satisfaction relation 
$(H,\ar,\vis),\upsilon \models \phi$ between an abstract execution $(H,\ar,\vis)$, a variable assignment $\upsilon$, and a formula $\phi$ is defined as usual, with the obvious interpretations for the $\arpred$ and $\vispred$ predicates.
The set of free variables of a formula $\phi$ is defined as usual, and a formula is closed if the set of free variables is empty.

\begin{remark}
    The relations $\rb$ and $\ssrel$ introduced in Section~\ref{sec:graphs-finite-histories} can be defined in the logic. For instance, $\varopa\logrl{rb}\varopb \eqdef \varopa.rtime<\varopb.stime$.
    The direct successor relations $\dsuc$, $\dsucp{p}$, $\succrl{p}{ar}$, etc introduced in the previous section can also be defined in the logic.
    For instance, 
    $$
        \varopa\succrl{p}{\mathsf{rb}}\varopb \quad \eqdef \quad \varopb.proc=p \;\land\; \varopa \logrl{rb}\varopb \;\land\;
            \neg\exists\varopc.\varopc\ssrel\varopb\land  \varopa\logrl{rb}\varopc  \land  \varopc\logrl{rb}\varopb
    $$
\end{remark}

\subsection{Expressing Consistency Models in HistMSO}
In this section we exercise the expressive power of HistMSO by formally defining several consistency models from Viotti and Vukolic classification~\cite{ViottiVukolic}. Our finding is that almost all of the 42 models formalized by Viotti and Vukolic (see~\cite{ViottiVukolic} Figure~1) can be expressed in our logic. The only exceptions are the few models based on exact timing constraints such as \textsl{Timed Visibility}\cite{DBLP:conf/spaa/SinglaRH97}\footnote{even those models could be approached by moving to a discrete time semantics}. Due to space constraints, we only present a very short selection of consistency models here, showing by some examples how to reformulate in HistMSO the meta logic used by Viotti and Vukolic. The reader may consider looking first at Viotti and Vukolic
formalisation of the consistency models in their meta logic, see page 41 of~\cite{ViottiVukolic}.

\subsubsection{Return value consistency} ensures that all operations return the appropriate and correct values. In this work, this means that any read operation should output the most recent value written to the corresponding object by a visible operation. The following formula intuitively captures this behavior: it holds if all \textsl{writes} return the empty value $\Phi$, and if every \textsl{read} either accesses an object that has not been written to before, or retrieves the last (in the sense of arbitration) value visibly written to that object.

Before expressing these with formulas, let us first recall the notion of context of an operation, introduced by Viotti and Vukolic. An operation $\opb$ is part of the context explaining a read operation $\opa$ if it is a write on the same object that is visible to $\opa$.

$$
\opb\in\mathsf{ctxt}(\opa) \quad \eqdef\quad \opb\vis \opa\wedge \opb.type=write\wedge \opb.obj=\opa.obj
$$

It becomes then easy to express return value consistency in HistMSO.

$$
    \begin{array}{rcl}
        \textsc{RVal} & \eqdef & \forall \opa.\; W(a) \wedge R(a)
        \\
        W(a) & \eqdef & \opa.type = write \Rightarrow \opa.oval = \Phi 
        \\ 
        R(a) & \eqdef & \opa.type = read \Rightarrow \opa.oval=\mathsf{lastWrite}(\mathsf{ctxt}(\opa)) 
        \\
        v=\mathsf{lastWrite}(C) & \eqdef & \forall \opb.\; \big(\opb\in C\wedge \opb.ival\neq v\big)\; \Rightarrow\; \exists \opc \in C.\; \opb\ar\opc
    \end{array}
$$

\subsubsection{Real time} guarantees that any two operations not concurrently executed are ordered by \textsl{arbitration} according to absolute time. In Viotti and Vukolic meta logic, this is expressed as $rb\subseteq ar$. In HistMSO, this becomes

$\textsc{RealTime} \eqdef \forall a.\forall b. a\xrightarrow{rb}b \Rightarrow a\xrightarrow{ar}b $.

\subsubsection{Linearizability} states that \say{each operation shall appear to be applied instantaneously at a certain point in time between its invocation and its response}~\cite{HerlihyW90}. Following Burckhardt~\cite{PrinciplesOfEventualConsistency}, Viotti and Vukolic formalize linearisability as the conjunct of return value consistency, real time, and a third notion, called single order:
$
 \textsc{Linearizability} \eqdef \textsc{SingleOrder}\land\textsc{RealTime}\land\textsc{RVal}.
$
Intuitively, single order tights toghether visibility and arbitration. In Viotti and Vukolic meta logic, this is expressed as
$
\exists H ' \subseteq \{ op \in H : op.oval= \nabla \} : vis = ar\setminus (H'\times H).
$
In HistMSO, this becomes
$$
\begin{array}{rcl}
\textsc{SingleOrder} & \eqdef & \exists X.( \forall x.x\in X\Rightarrow x.oval=\nabla )\land\\ 
&& \forall a.\forall b.(a\xrightarrow{vis}b \Leftrightarrow a\xrightarrow{ar}b\land\neg(a\in X))
\end{array}
$$  

\subsubsection{Quiescent Consistency} requires that if all objects stop receiving
updates (i.e., become quiescent), then the execution is equivalent to some sequential execution containing only complete operations. Although this definition resembles eventual consistency, it does not guarantee termination: a system that does not stop receiving updates will not reach quiescence, thus replicas convergence. Following Viotti and Vukolic\footnote{We slightly simplified their formula for presentation purposes. Their formula was $|H_{|wr}|<\infty \Rightarrow \exists C\in\mathcal C :\forall [f]\in H/{\approx_{ss}}:|\{op\in[f]:op.oval\neq\mathsf{lastWrite}(C)\}|<\infty
$. In this formula, $C$ is a context and $[f]$ is the set of operations of some process. Due to the finiteness of the set of processes, it is equivalent to $|H_{|wr}|<\infty \Rightarrow \exists C\in\mathcal C:\, |\{op\in H \mid op.oval\neq \mathsf{lastWrite}(C)\}|<\infty
$.}, we can express quiescent consistency in HistMSO based on set finiteness.

$$
\begin{array}{rcl}
    \textsc{QuiescentConsistency} & \eqdef & \mathsf{Finite}(\{\opa\mid\opa.type=write\})\; \Rightarrow  \; \mathsf{FiniteInconsistency}
    \\
    \mathsf{FiniteInconsistency} & \eqdef &
    \exists a.\ a.type=read \\ 
    && \wedge \mathsf{Finite}\big(\{b\mid b.oval\neq\mathsf{lastWrite}(\mathsf{ctxt}(a))\}\big)
\end{array}
$$

On arbitrary structures, set finiteness is not MSO definable. However, since $\omega$-histories are non-Zeno and involve a finite number of processes, a subset $O$ of an $\omega$-history $H$ is finite if and only if there is a time $t$ such that all operations of $O$ starts before $t$.
$$
\mathsf{Finite}(O)\eqdef (\neg\exists a\in O)\; \vee\; 
\exists \opa\in O.\; \forall \opb\in O.\;\opb.stime\leq \opa.stime
$$

All together, we have that quiescent consistency is expressible in HistMSO.

\section{Translation to MONA\label{sec:translation-to-mona}}
In this section we show how to translate HistMSO to MSO over natural numbers.
This translation allows us to leverage MONA's decision procedures to verify properties of finite histories and abstract executions in distributed systems. We first recall some background on MONA, then we present our translation for histories, and finally we discuss how to extend this translation to abstract executions.

\subsection{Background on MONA}

MONA deals with the following logical formalism:
$$
\varphi ::= x < y \mid x \in X \mid \neg \varphi \mid \varphi \land \varphi \mid \exists x.\varphi \mid \exists X.\varphi
$$
where $x,y$ are first-order variables ranging over natural numbers, and $X$ is a second-order variable ranging over sets of natural numbers. A model of a formula $\varphi$
with free (first and second order) variables $\upsilon_1,\ldots,\upsilon_n$ (the order of the enumeration matters) is a word over the alphabet $\{0,1\}^n$. Each position $i$ of the word corresponds to a natural number, and the $j$-th bit of the letter at position $i$ is 1 if and only if the $j$-th order-1 (resp. order 2) variable is interpreted as equaling to (resp. containing) $i$ (see~\cite{MONAguide} for more details).

Our goal in the following is to define an encoding function $\Enc$ that takes as input a finite history $H\in\historyset$ and outputs a word of bitvectors $\Enc(H)\in(\{0,1\}^n)^*$ (for some $n$ depending on the meta parameters $\mathbb{P}, \mathbb{T}, \mathbb{O}, \mathbb{V}$) and a translation $T(\varphi)$ of a formula $\varphi$ of the MSO logic of histories and abstract executions defined in Section~\ref{sec:definition-logic} into a MONA formula $T(\varphi)$ such that 
\begin{equation}\label{eq:correctness-of-translation}
H \models \varphi \qquad \iff \qquad \Enc(H) \models_{\tiny \mathsf{MONA}} T(\varphi)
\end{equation}

\subsection{Encoding histories to words}

Each letter of $Enc(H)$ represents the operations and relations of $H$ at a certain point in time: the word $Enc(H)$ forms a sequence of snapshots of $H$ over time, quite similar to a discrete sampling of a continuous signal. To define the encoding function we follow these steps: coding the timeline, then adding information about types and values.

\subsubsection{Coding the timeline}
Let us first fix some enumeration $\mathbb{P}=\{p_1,...,p_m\}$ of the (finite) set of processes.
For a finite history (resp. an $\omega$-history) $H$, and for a timestamp $t\in \mathbb{R}_{\geq 0}$,
let $\mathsf{snapshot}(H, t)\in\{0,1\}^{|\mathbb{P}|}$ denote the bitvector 
whose $i$-th coordinate is 1 if and only if process $p_i$ is active in $H$ at timestamp $t$;
for an increasing sequence $\mathbf{T}$ of timestamps $t_0<t_1<...$ we define
$\mathsf{snapshots(H,\mathbf{T})}$ as the sequence of snapshots $\mathsf{snapshot}(H, t_i)$.
The timeline encoding $\Enc_{tl}(H)$ is $\Enc(H,\mathbf{T}_H)$ where $\mathbf{T}$ denote the enumeration of all starting and ending times of operations in $H$.

\begin{example}
The construction is illustrated on Figure~\ref{fig:encodingExample1AllVectors}.
The history $H$ on this figure is run on three processes. 
We therefore have $Enc(H)\in (\{0,1\}^3)^*$. Each time an operation starts or returns, we take a snapshot. 
\end{example}

\begin{figure}[t]
\begin{center}
\resizebox{0.7\textwidth}{!}{
\begin{tikzpicture}[x=1cm, y=1.25cm,line width=2pt, every node/.style={font=\LARGE}]
  
  \node[anchor=east] at (-0.3, 0) {$p_1$};  
  \node[anchor=east] at (-0.3, -1) {$p_2$}; 
  \node[anchor=east] at (-0.3, -2) {$p_3$};  

  \draw[dashed,line width=1pt] (0,0) -- (19,0);
  \draw[dashed,line width=1pt] (0,-1) -- (19,-1);
  \draw[dashed,line width=1pt] (0,-2) -- (19,-2);

  % Horizontal operations
  \draw[withcaps] (2,0) -- (5,0) node[midway, above] {$\opa$};
  \draw[withcaps] (10,0) -- (14,0) node[pos=0.45, above] {$\opb$};
  \draw[withcaps] (15.5,0) -- (18,0) node[midway, above] {$\opc$}; 
  
  \draw[withcaps] (3,-1) -- (7,-1) node[midway, above] {$\opd$};
  \draw[withcaps] (12,-1) -- (17,-1) node[midway, above] {$\ope$};

  \draw[withcaps] (8,-2) -- (11,-2) node[midway, above] {$\opf$};

  % Vertical time lines at starts and ends
  \foreach \x/\t/\vone/\vtwo/\vthree in {
  0/T_0/0/0/0,
  2/T_1/1/0/0,
  3/T_2/1/1/0,
  5/T_3/0/1/0,
  7/T_4/0/0/0,
  8/T_5/0/0/1,
  10/T_6/1/0/1,
  11/T_7/1/0/0,
  12/T_8/1/1/0,
  14/T_9/0/1/0,
  15.5/T_{10}/1/1/0,
  17/T_{11}/1/0/0,
  18/T_{12}/0/0/0
  } {
    \draw[dashed,blue] (\x,-2) -- (\x,0) 
	node[below=10em,black] {$\begin{bmatrix} \vone \\ \vtwo \\ \vthree \end{bmatrix}$}
    node[below] at (\x,-2.20) {$\t$};
  }

\end{tikzpicture}
}
\end{center}
  \caption{A finite history $H$ and its timeline encoding $\Enc_{tl}(H)$}
\label{fig:encodingExample1AllVectors}
\end{figure}
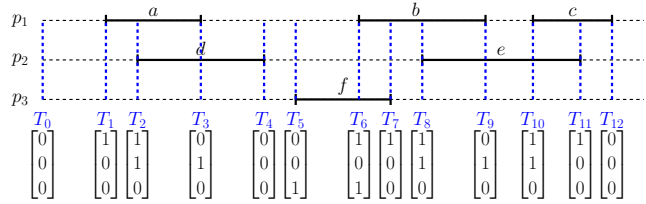

\begin{remark}
We assumed in Definition~\ref{def:history} that the histories we consider are such that every timestamp
occurs in at most one operation extremity, and no operation starts at $t=0$. 
This ensures that every two consecutive vectors in $\Enc_{tl}(H)$ differ at exactly one coordinate and help identify the process on which an operation starts or stop at a given timestamp (see for instance the formula $\mathsf{isStart}$ we later define).
\end{remark}

\subsubsection{Adding information about types, values, and objects}

For now the only information encoded are the start and return times. We want to enrich the encoding with the $type$, $ival$/$oval$ and $obj$ attributes. 
We will do so by adding coordinates to the vectors we just used to define $\Enc_{tl}(H)$. 
Remember that we assumed the meta parameters $\Types$, $\Values$ and $\Objects$ are finite sets. 
We can therefore assume that a type, a value, or an object identifier can be coded on a number of bits that can be determined in advance. 
For instance, we assumed above for simplicity that $\Types$ contains only \texttt{read} and \texttt{write}.
We can simply code a \texttt{read} as a 0 and a \texttt{write} as a 1. For $Values$, we can stick to their machine representation. To encode $\Objects$ we can fix an enumeration $o_1,\ldots o_n$ and represent the object identifier $o_i$ by the binary encoding of $i$ of $\left\lceil\log_2(n)\right\rceil$ bits.

In $\Enc_{tl}(H)$, two successive vectors differ in only one coordinate. This means, in particular, that when introducing a new operation $\opa$ on a vector $w$, we only need to encode data about $\opa$. 

Each vector of $Enc(H)$ is of the following form. The $m$ first coordinates are used as explained above in the definition of $\Enc_{tl}(H)$. The following coordinates encode the attributes of the operation that either starts or stop at this timestamp: the $m+1$ coordinate encodes the type of the operation (0 for $read$, 1 for $write$), the next $\left\lceil \log_2 |Values| \right\rceil$ coordinates are dedicated to encoding either the input or the output value\footnote{A $read$ operation does not have an input value and a $write$ operation does not have an output value}, and the last $\left\lceil \log_2 |Objects| \right\rceil$ to the encoding of the object identifier. The very first vector is the null vector.

\subsubsection{Translating formulas}

Finally, we sketch the definition of the translation $\varphi\mapsto T(\varphi)$
of MSO formulas over histories to MONA formulas. Each line $\ell$ 
of a bitvector of $\Enc(H)$ is associated with a second order variable $X_{\ell}$.
The translation otherwise maps each first order variable $\varopa$ (ranging over an operation) to a first order variable ranging over a column of $\Enc(H)$, that we also write $\varopa$. Similarly, second order variables are mapped to second order variables.
More precisely, we define the translation inductively as follows:
$$
\begin{array}{l}
T(\varphi\vee\psi) \eqdef T(\varphi) \vee T(\psi) 
\qquad
T(\neg\varphi) \eqdef \neg T(\varphi) 
\\
T(\exists \varopa.\ \varphi) \eqdef \exists \varopa.\ \mathsf{isStart}(\varopa)\wedge T(\varphi)
\qquad
T(\varopa\in A) \eqdef \varopa \in A
\\
T(\exists A.\ \varphi) \eqdef \exists A.\ (\forall \varopa.\ \varopa\in A\Rightarrow \mathsf{isStart}(\varopa))\wedge T(\varphi)
\end{array}
$$
where $\mathsf{isStart}(\varopa)$ is a MONA formula stating that the vector at position $\varopa$ corresponds to the start of an operation (that is, $a\neq 0$ and the only bit that differs between $a-1$ and $a$ is equal to $1$ at $a$).
The atomic formulas are translated as expected from the encoding defined above. For instance $T(\varopa.obj=p)= \varopa \in X_p$, where $X_p$ is the free variable associated with the line of $p$ in $\Enc_{tl}(H)$. The translation of $\varopa.tattr<\varopb.tattr'$ is $\exists \varopa',\varopb'.\ tattr(\varopa,\varopa')\wedge tattr'(\varopb,\varopb')\wedge \varopa'<\varopb'$, with
$stime(\varopa,\varopa')\eqdef \varopa=\varopa'$,
and $rtime(\varopa,\varopa')$ expressing that $\varopa'$ is the first position 
after $\varopa$ where the bit on the line of the process of $\varopa$ is null. We omit the full definition of the translation for brevity. We also omit the definition of the formula $\mathsf{isEncoding}$ that checks that a word is indeed an encoding of a history, i.e $w\models_{\mathsf{MONA}} \mathsf{isEncoding}$ if and only if there exists $H\in\historyset$ such that $w=\Enc(H)$.

\begin{theorem}
  For every closed formula $\varphi$ of HistMSO$\setminus\{\ar,\vis\}$, for every finite history $H\in\historyset$ (respectively $\omega$-history $H\in\infhistoryset$), it holds that
  $$
    H \models \varphi \qquad \iff \qquad \Enc(H) \models_{\mathsf{MONA}} T(\varphi)
  $$
\end{theorem}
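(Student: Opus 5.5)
We prove a stronger statement that handles free variables, by structural induction on $\varphi$. Fix $H$ (finite or $\omega$) and let $t_0<t_1<\dots$ be the enumeration $\mathbf{T}_H$ of all extremities of operations of $H$, preceded by the position of the initial null vector. By extremity-disjointness (Definition~\ref{def:history}) each timestamp $t_i$ is the start or the return of exactly one operation. We therefore get a bijection $\iota$ from $H$ onto the set $S_H$ of positions $i$ at which some operation starts, by sending $\opa$ to the index of $\opa.stime$. The first step is to show that $\mathsf{isStart}(x)$ holds in $\Enc(H)$ exactly at positions in $S_H$. Between consecutive snapshots only the bit of the process whose operation starts or returns flips, and it flips from $0$ to $1$ exactly at a start, because operations on one process are sequential. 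This is why $\mathsf{isStart}$ and its definition via ``the unique differing bit equals $1$'' work.

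We lift $\iota$ to valuations. A HistMSO valuation $\upsilon$ maps first-order variables into $H$ and second-order variables into $2^H$. Its image $\iota\circ\upsilon$ maps them into $S_H$ and $2^{S_H}$, and every MONA valuation that satisfies the relativisation constraints arises in this way. The induction hypothesis is: for every formula $\varphi$ with free variables and every $\upsilon$,
\[
H,\upsilon\models\varphi \iff \Enc(H),\iota\circ\upsilon\models_{\mathsf{MONA}} T(\varphi).
\]
The closed case is the theorem.

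The inductive cases for $\vee$, $\neg$ and $\in$ are immediate. The quantifier cases use the bijectivity of $\iota$ and the fact that $T$ guards quantifiers with $\mathsf{isStart}$. For a second-order variable, the set is guarded to be a subset of $S_H$, and subsets of $S_H$ are exactly the images of subsets of $H$.

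The main work is the atomic cases, which we verify one by one.
\begin{itemize}
\item Process. $\opa.proc=p_i$ holds iff the $i$-th bit flips at position $\iota(\opa)$.
\item Data attributes. Type, object and the value field are read directly from the extra coordinates at $\iota(\opa)$. For $oval$ we must read them at the return position, so we use $rtime$ there. When the operation never returns, the return position is absent, and this position is how $oval=\nabla$ is encoded.
\item Time comparisons. We must check that $rtime(x,x')$ picks out exactly the position of $\opa.rtime$. After $\iota(\opa)$, the bit of $\opa.proc$ stays $1$ precisely while $\opa$ is active and drops to $0$ exactly at its return. Here we again use that operations on the same process are sequential and extremity-disjoint, so no other operation of that process can start first. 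If no such $x'$ exists, the return time is $+\infty$. The comparisons $t<t'$ then reduce to comparing indices, because $i\mapsto t_i$ is strictly increasing. We also need the convention that comparisons involving $+\infty$ are translated consistently, for example that $\opa.rtime<\opb.stime$ is false when $\opa$ never returns. This is exactly what the existential form of the translation gives.
\end{itemize}

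For $\omega$-histories we additionally need the non-Zeno assumption, which makes $\mathbf{T}_H$ an $\omega$-sequence. Then $\Enc(H)$ is an $\omega$-word, positions are natural numbers, and the same argument applies verbatim.

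I expect the main obstacle to be the atomic time and $oval$ cases, specifically the treatment of never-returning operations and of the value field shared between start and return positions. I would handle these by fixing precisely what the value coordinates at a position mean: the input value at a start position and the output value at a return position.
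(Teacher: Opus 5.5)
Your architecture is the right one, and it is what the paper's sketched translation is built for: transport valuations along the bijection $\iota$ onto start positions, characterise $\mathsf{isStart}$ by the unique $0\to1$ flip, and induct using the quantifier guards. The paper itself gives no proof beyond that sketch. The gap is in the atomic case you singled out as the main obstacle.

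You adopt the convention that a never-returning operation has no $x'$ with $rtime(x,x')$. This is forced for $\omega$-histories, where the process bit stays $1$ forever. Under that convention the existential translation $\exists x',y'.\,tattr(x,x')\wedge tattr'(y,y')\wedge x'<y'$ is false as soon as \emph{either} side is $+\infty$. That is correct when $+\infty$ is on the left. It is wrong when $+\infty$ is on the right of a finite term, since $s<+\infty$ holds.

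A concrete counterexample: $\forall\varopa.\ \varopa.stime<\varopa.rtime$ is true in every history, by the standing assumption $o.stime<o.rtime$. Yet its translation is false on $\Enc(H)$ whenever $H$ contains an operation that never returns. The same failure affects the equality atoms $\varopa.rtime=\varopb.rtime$ and $\varopa.oval=\varopb.oval$ when the return positions are absent, already for $\varopa=\varopb$. So the sentence ``this is exactly what the existential form of the translation gives'' is false, and with the literal sketched translation the equivalence does not hold.

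You need to change $T$ on these atoms so that an absent position acts as a top element. For example, take $T(t<t')\eqdef\exists x'.\,tattr(x,x')\wedge\big(\neg\exists y'.\,tattr'(y,y')\vee\exists y'.\,(tattr'(y,y')\wedge x'<y')\big)$. Translate equality of return times (resp.\ output values) as ``both return positions absent, or both present and equal (resp.\ carrying the same value code)''. With this change, the atomic case follows from the order-isomorphism $i\mapsto t_i$, extended by sending ``absent'' to $+\infty$.

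You should also make explicit which semantics $\models_{\mathsf{MONA}}$ denotes. For finite words under MONA's default WS1S reading, positions past the end of $\Enc(H)$ exist and carry all-zero bits. So $rtime(x,x')$ \emph{does} find a witness for a non-returning operation, namely the first position past the word. This accidentally repairs the comparisons, but it breaks your reading of $oval=\nabla$ as absence of a return position. You must fix one convention (for instance M2L-Str) and argue consistently with it.

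For $\omega$-histories, ``the same argument applies verbatim'' only if the target is MSO over $\omega$-words with unrestricted set quantification, i.e.\ S1S and B\"uchi automata, which MONA does not provide. Here $H$ is infinite and HistMSO quantifies over infinite subsets of $H$. Your correspondence between $2^H$ and subsets of $S_H$ therefore needs infinite sets on the word side.
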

\begin{corollary}
  The HistMSO theory of finite histories and the HistMSO theory of $\omega$-infinite histories are decidable.
\end{corollary}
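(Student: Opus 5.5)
The plan is to prove a stronger statement for open formulas, by structural induction on $\varphi$, and obtain the theorem as the closed case. First I would fix the correspondence between operations and positions. Given $H$ (finite or $\omega$), let $\mathbf{T}_H=t_0<t_1<\dots$ enumerate the extremities of operations, with a leading null snapshot at position $0$. Extremity-disjointness (Definition~\ref{def:history}) gives a strictly increasing sequence. Non-Zenoness in the $\omega$ case ensures it is order-isomorphic to $\mathbb{N}$, so $\Enc(H)$ is a well-defined $\omega$-word. Define $\pi:H\to\mathbb{N}$ by sending each operation $o$ to the index of $o.stime$ in $\mathbf{T}_H$.

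The first lemma states that $\pi$ is a bijection between $H$ and the set of positions satisfying $\mathsf{isStart}$. Injectivity follows from extremity-disjointness. For surjectivity and for the image property, I would use three facts:
\begin{itemize}
\item consecutive snapshots differ in exactly one process bit;
\item a bit switches $0\to1$ exactly at an invocation, since same-process operations are sequential and so a process is never active twice at once;
\item no operation starts at time $0$.
\end{itemize}
A second lemma concerns return times. For $o\in H$, the first position after $\pi(o)$ where the bit of $o.proc$ is $0$ is the index of $o.rtime$, and no such position exists when $o.rtime=+\infty$. Hence $rtime(\pi(o),x')$ holds iff $x'$ is that index. A third lemma says that the attribute bits at position $\pi(o)$ encode $o.type$, $o.obj$ and $o.ival$ (and $o.oval$ at the return position). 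Sequentiality on each process makes the process line identifying $o$ unambiguous, which gives this lemma.

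The inductive claim is as follows. For every $\varphi$ and every assignment $\upsilon$ of operations and sets of operations, $H,\upsilon\models\varphi$ iff $\Enc(H),\pi\circ\upsilon\models T(\varphi)$.
\begin{itemize}
\item \textbf{Atoms.} Each atom is checked against the three lemmas. Since $\mathbf{T}_H$ is increasing, $t<t'$ between extremities corresponds to $<$ on their indices. Attribute equalities and $.proc=p$ reduce to comparing the encoded bits. The atoms $.oval=\nabla$ reduce to the nonexistence of a return position.
\item \textbf{Connectives.} Disjunction and negation are immediate.
\item \textbf{First-order quantifier.} Since $T(\exists\varopa.\varphi)$ relativizes $\varopa$ to $\mathsf{isStart}$, the bijection $\pi$ transfers witnesses in both directions.
\item \textbf{Set quantifier.} $T(\exists A.\varphi)$ relativizes $A$ to subsets of $\mathsf{isStart}$ positions, which $\pi$ puts in bijection with subsets of $H$. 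In the $\omega$ case MSO over $\omega$-words quantifies over arbitrary (possibly infinite) subsets, matching HistMSO's arbitrary subsets. In the finite case every subset is finite, matching MONA's weak semantics on finite words.
\end{itemize}

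The corollary then follows. $\varphi$ is satisfiable over (finite or $\omega$-) histories iff $\mathsf{isEncoding}\wedge T(\varphi)$ is satisfiable as a finite (respectively $\omega$-) word. Satisfiability is decidable by Büchi–Elgot–Trakhtenbrot (respectively Büchi), and the same reduction gives validity. For this I need the stated property of $\mathsf{isEncoding}$, which holds for the $\omega$ variant too: every word satisfying it is realizable by choosing timestamps $t_i=i$, which is non-Zeno.

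The main obstacle is the first lemma together with the $rtime$ lemma, i.e. getting the exact correspondence between operations and $\mathsf{isStart}$ positions right. Here I will carefully use the history assumptions from Definition~\ref{def:history}: extremity-disjointness, sequentiality within a session, and start times $>0$. I will also separate never-returning operations, whose bit stays $1$ forever on an $\omega$-word and which have no return position at all on a finite word.
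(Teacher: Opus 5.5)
Your proposal is correct and follows the paper's route. You establish correctness of $T$ with respect to $\Enc$, and you actually prove it by induction on open formulas via the bijection between operations and $\mathsf{isStart}$ positions, where the paper only sketches it. Combined with the characterisation of encodings by $\mathsf{isEncoding}$, this reduces satisfiability and validity over finite (resp.\ $\omega$-) histories to MSO over finite (resp.\ infinite) words. You are also right that the $\omega$ case needs full S1S (B\"uchi) rather than MONA's weak semantics.

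One detail in the atom case needs fixing. A comparison such as $\varopa.stime<\varopb.rtime$ with $\varopb.rtime=+\infty$ is true, yet $\varopb$ has no return position. It must therefore be translated as ``$\varopb$ has no return position, or both positions exist and are ordered''. The paper's sketched $\exists\varopa',\varopb'$ translation would make this case false, and your clause ``$t<t'$ between extremities'' does not yet cover it.
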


\subsection{Extension to abstract executions}
In this section we explain how to extend the encoding defined for histories to abstract executions. It is not clear how to encode the relations $\vis$ and $\ar$ directly in the word encoding of an abstract execution, as they are not directly related to the timeline of operations. Therefore, we propose a few assumptions that can be made on these relations, and then exploit them to extend our encoding technique.

\subsubsection{Representing arbitration in real-time histories}

Remember that arbitration is \emph{any} total order on operations. Without any further assumption, the two relations $\rb$ and $\ar$ could be used to define a grid, and the logic would be undecidable even at first order.

However, in practice, arbitration is not completely independent of the timeline. 
We therefore make the assumption that the abstract executions we consider are 
real-time\footnote{Weaker forms of this assumption, like "boundedly deviating from real-time", could be worth exploring.} (see formula \textsc{RealTime} in the previous section).

With this restriction, the only pairs of operations $(\opa,\opb)$ for which our encoding should specify whether $\opa$ is arbitrated before $\opb$ or the converse are those pairs of concurrent operations. 
To encode this information, we add $|\mathbb{P}|-1$ extra bits to the vectors we used
in the definition of $\Enc(H)$. If $t$ is the starting time of an operation $\opa$ on process $p_i$, then the $j$-th extra bit (for $j\neq i$) indicates whether $\opa \ar \opb$, where $\opb$ is the last operation started on process the $j$th process (skipping $p_i$ if $j>i$) before time $t$.

This coding allows us to define a binary relation $\xrightarrow{arc}$ such that 
$\xrightarrow{arc}\cap\xleftarrow{arc}=\emptyset$ and $a(\xrightarrow{arc}\cup\xleftarrow{arc})b$ iff $a$ and $b$ are concurrent. We then just need to define
$\ar$ as $(\rb \cup \xrightarrow{arc})^+$, and assert in the formula $\mathsf{isEncoding}$ that $\ar$ is acyclic.

\subsubsection{Representing visibility under real-time and $k$-transient visibility assumptions}

Visibility is even more versatile and challenging to encode. We again rely on the real-time assumption to restrict the possible visibility relations, which enforces that $\forall \opa,\opb, \opa \rb \opb \Rightarrow \neg(\opb\vis\opa)$.
We can therefore decompose the visibility relation into two parts: the visibility between concurrent operations, which can be encoded as we did for arbitration, and the visibility between operations $\opa$ and $\opb$ such that $\opa\rb\opb$, two relations we denote respectively by $\visc$ and $\visrb$, with $\vis=\visc\cup\visrb$. We therefore add an extra bit to the vectors of $\Enc(H)$ to encode $\visc$ as we did for arbitration.

In order to encode $\visrb$, we need to make further assumptions on visibility. We cannot make too strong assumptions, as we want to be able to tell apart all consistency models defined in~\cite{ViottiVukolic}. For instance,
it would be problematic to make the assumption that once an operation $\opa$ is visible to another operation $\opb$, then it is also visible to all operations started after $\opb$ on the same process. Indeed, this assumption, known as PRAM, or FIFO, subsumes some prominent consistency models like \emph{monotonic reads}, \emph{monotonic writes}, or \emph{read my writes}.

Remember that $a\succrl{p}{rb} b_0$ if $b_0$ is the first operation on process $p$ that starts after the return time of $a$. Given a fixed $a$ and $p$, 
consider $b_0,b_1,\ldots,b_{i},\ldots$ such that $b_0\succrl{p}{rb} b_{1} \succrl{p}{rb} \cdots \succrl{p}{rb} b_{k-1}$. We say that $a$ is $k$-transient to process $p$ if for all $i\geq k$, $a\vis b_i$ iff $a\vis b_{k-1}$.
Intuitively, this means that the visibility of an operation $\opa$ to operations on a given process $p$ stabilises after a bound $k$ to either always visible or always invisible. 

\begin{definition}[$k$-transient visibility]
Let $k\geq 0$. An abstract execution $(H,\ar,\vis)$ has $k$-transient visibility if for all operations $\opa\in H$, for all processes $p\in \mathbb{P}$, $a$ is $k$-transient to process $p$.
\end{definition}

We therefore add an extra meta parameter $k$, and add $k\cdot |\mathbb{P}|$ bits to the vectors of $\Enc(H)$. These bits are used to encode, for each process timestamp
corresponding to the start of an operation $\opa$, the visibility of $\opa$ to the first $k$ operations started on each process $p$ after the return of $\opa$.
With this encoding, we can define a formula for $\visrb$ in MONA, and impose in the formula $\mathsf{isEncoding}$ that $\vis=\visc\cup\visrb$ is acyclic.

\begin{theorem}
  For every closed formula $\varphi$ of HistMSO for every finite (resp. $\omega$-infinite) abstract execution $(H,\ar,vis)$, it holds that
  $$
    H \models \varphi \qquad \iff \qquad \Enc(H) \models_{\mathsf{MONA}} T(\varphi)
  $$
\end{theorem}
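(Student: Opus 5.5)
The proof goes by structural induction on $\varphi$, proving the stronger statement for open formulas. For every real-time, $k$-transient abstract execution $(H,\ar,\vis)$ (the class implicitly fixed by the encoding), every formula $\varphi$ with free variables, and every valuation $\upsilon$, we show
$$
(H,\ar,\vis),\upsilon \models \varphi \iff \Enc(H,\ar,\vis), \hat\upsilon \models_{\mathsf{MONA}} T(\varphi).
$$
Here $\hat\upsilon$ is obtained by composing $\upsilon$ with the map $\pi$ that sends each operation to the position of its start snapshot. The closed case is then the statement.

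The first step is to check that $\pi$ is a bijection between $H$ and the positions satisfying $\mathsf{isStart}$. This follows from extremity-disjointness: each position after the first differs from its predecessor in exactly one process bit. It also follows from the absence of concurrent operations on a process, so a $0\to1$ flip on line $i$ marks exactly one operation start. For $\omega$-histories, non-Zenoness guarantees that the snapshot sequence is indexed by $\mathbb{N}$, so every operation gets a finite position. With this bijection, the boolean and quantifier cases are routine. The relativisation of $\exists\varopa$ and $\exists A$ to $\mathsf{isStart}$ matches quantification over operations and sets of operations.

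Next come the history atoms. For $t<t'$, we show that $rtime(\pi(a),x')$ holds iff $x'$ is the position of the return snapshot of $a$, which is the first later $0$ on $a$'s process line. Snapshot positions are strictly increasing in time, so comparing positions agrees with comparing timestamps. An operation that never returns has no such $x'$. We therefore need the convention that $\varopa.rtime<t$ is false and $t<\varopa.rtime$ is true in this case, and the translation must include that disjunct. The attribute atoms $\varopa.attr=\varopb.attr$, $\varopa.proc=p$, and $g$ are read off the data bits at the start position, or at the return position for $oval$. We need $\mathsf{isEncoding}$ to make these well defined.

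The new content is the atoms $\arpred$ and $\vispred$, and I expect this to be the main obstacle. For $\ar$, I would show that the extra bits define an $\xrightarrow{arc}$ relation that relates exactly the concurrent pairs, each in one orientation. Concurrent pairs $(a,b)$ with $b$ started first are exactly those where $b$ is the last operation started on its process before $a$ and is still running, so it is indeed captured by a bit at $a$'s start. By \textsc{RealTime}, $\ar$ agrees with $\rb\cup\xrightarrow{arc}$ on all pairs. Totality and acyclicity, which $\mathsf{isEncoding}$ asserts, then give $\ar=(\rb\cup\xrightarrow{arc})^+$. The transitive closure is MSO-expressible by the standard least-closed-set formula. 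For $\vis$, no look-ahead rules out $b\rb a$, which splits $\vis$ into $\visc$ (handled like $\xrightarrow{arc}$) and $\visrb$. For $\visrb$, given $a\rb b$ with $b$ on process $p$, we locate $b$ among the chain $b_0\succrl{p}{rb}b_1\cdots$. The MONA formula states that if $b$ is among the first $k$ operations on $p$ after $a$'s return, then the corresponding stored bit is $1$; otherwise the bit for $b_{k-1}$ is $1$. $k$-transience makes this agree with $\vis$. Finally, I would check that $\mathsf{isEncoding}$ characterises exactly the encodings of such executions, so that each word corresponds to some execution, as required for the decidability corollary.
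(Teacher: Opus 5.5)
Your plan (structural induction on open formulas, with valuations transported along the start-position map $\pi$, restricted to real-time $k$-transient executions, followed by atom-by-atom checks) is the argument the paper's construction calls for. The paper itself only sketches the encoding and states the theorem without proof. In places you are more careful than the paper: you correctly trace the split $\vis=\visc\cup\visrb$ to no look-ahead, and you notice that the stated translation of $t<t'$ mishandles operations that never return.

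The step that fails as written is ``$\visc$ (handled like $\xrightarrow{arc}$)''. Your argument for $\xrightarrow{arc}$ uses the totality of $\ar$. Each slot pairs a later-started operation with the last operation started on another process before it. Every concurrent pair is ordered in exactly one direction, so one bit per slot determines it. Visibility is only acyclic, so a concurrent pair has three admissible states. Take $a$ on $p_1$ during $[1,10]$ and $b$ on $p_2$ during $[2,3]$: each of $a\vis b$, $b\vis a$, and neither is compatible with no look-ahead, acyclicity and $k$-transience. A single orientation bit cannot represent this. Your claim that the bits define a relation covering exactly the concurrent pairs, each in one orientation, is therefore false for $\visc$. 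You need two bits per slot, one per direction. The correctness claim should then be that these bits reproduce $\vis$ restricted to concurrent pairs, with the both-bits-set combination ruled out by the acyclicity clause of $\mathsf{isEncoding}$. The paper's ``an extra bit'' has the same defect.

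A smaller point concerns the $\omega$ case. Your step for $\exists A$ is sound only if set variables on the word side range over arbitrary subsets of $\mathbb{N}$, i.e.\ under S1S (B\"uchi) semantics. The same holds for the least-closed-set formulas you use for $(\rb\cup\xrightarrow{arc})^+$ and for acyclicity. MONA implements WS1S, where sets are finite. Under that reading, $\exists A.\,\forall\varopa.\,\varopa\in A$ holds on every $\omega$-history, but its translation fails on $\Enc(H)$, since it requires a set containing infinitely many start positions. You should state explicitly that the $\omega$-half of the equivalence is with respect to S1S.
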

\begin{corollary}
  The HistMSO theory of finite (resp. $\omega$-infinite) abstract executions with real-time arbitration and $k$-transient visibility is decidable.
\end{corollary}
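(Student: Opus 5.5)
The proof is a structural induction on $\varphi$, extending the argument of the previous theorem for histories. We strengthen the statement to open formulas. Let $\upsilon$ be an assignment mapping each first-order variable $\varopa$ to an operation $o$ and each second-order variable $A$ to a set of operations. Let $\hat\upsilon$ be the MONA assignment that maps $\varopa$ to the position of $o.stime$ in $\Enc(H)$ and $A$ to the set of start positions of its operations. We prove
$$(H,\ar,\vis),\upsilon\models\varphi \iff \Enc(H),\hat\upsilon\models_{\mathsf{MONA}} T(\varphi).$$
The closed case is then immediate.

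First we fix the range of quantifiers. Extremity-disjointness makes $o\mapsto$ (position of $o.stime$) a bijection between $H$ and the positions satisfying $\mathsf{isStart}$. Under this bijection, the guarded quantifier clauses of $T$ match the HistMSO quantifiers exactly. The non-Zeno assumption guarantees that in the $\omega$ case every operation is sent to a finite position, so $\Enc(H)$ is an $\omega$-word. The Boolean cases and $\varopa\in A$ are then trivial.

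The atomic formulas of HistMSO$\setminus\{\ar,\vis\}$ go through exactly as in the previous theorem. For time comparisons, $rtime(\varopa,\varopa')$ picks the first later position where the process bit of $\varopa$ drops. This is the return position because operations in the same session are sequential. If there is no such position, the operation never returns; we must check that $T$ treats the missing $\varopa'$ as $+\infty$, which is consistent with $oval=\nabla$.

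The new work concerns the two relation symbols.

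For $\arpred$, we define the MONA formula $\mathsf{arc}(\varopa,\varopb)$. At the start position of whichever of $\varopa,\varopb$ starts later, it reads the extra bit for the other's process. It also checks that the other operation is the last one started on that process before this position and is still active there. We then show that, for concurrent operations, this bit encodes $\ar$ exactly. The reason is that two concurrent operations on distinct processes always satisfy this "last started, still active" configuration at the later start. Since $\ar$ is a total order containing $\rb$ (real-time), it equals $(\rb\cup\xrightarrow{arc})^+$. Transitive closure is MSO-definable via the usual closed-set formula, and $\mathsf{isEncoding}$ enforces acyclicity, so the translation is correct.

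For $\vispred$, we split it into $\visc$ and $\visrb$, using no look-ahead/real time to rule out $\varopb\rb\varopa$. $\visc$ is handled like $\mathsf{arc}$. For $\visrb$, let $p$ be the process of $\varopb$. We compute in MSO the index $i$ of $\varopb$ in the chain $b_0\succrl{p}{rb}b_1\cdots$ starting from $\varopa$, capped at $k-1$. The cap is expressed with $k$ nested existentials, since $k$ is fixed. We then read bit $(p,\min(i,k-1))$ at $\varopa$'s start position. Correctness is exactly the $k$-transience definition.

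We expect this last step to be the main obstacle. Computing the chain index requires the $\succrl{p}{rb}$ relation, which is defined from $rtime$. We must also prove that the encoding is surjective onto $\mathsf{isEncoding}$-words, i.e. that every such word decodes to a genuine real-time $k$-transient abstract execution. Handling acyclicity of $\vis$ and totality of the reconstructed $\ar$ there is where care is needed.
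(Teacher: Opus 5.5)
Your proposal is correct in outline and follows the paper's route. You enrich $\Enc(H)$ with arbitration bits for concurrent pairs, concurrent-visibility bits and $k\cdot|\mathbb{P}|$ transient-visibility bits. You recover $\ar$ as $(\rb\cup\xrightarrow{arc})^+$ and enforce acyclicity in $\mathsf{isEncoding}$. The corollary then follows by deciding $\mathsf{isEncoding}\Rightarrow T(\varphi)$ over words, which is exactly where your decoding (surjectivity) check is needed.

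Two small fixes:
\begin{itemize}
\item $\visc$ cannot be encoded with one bit per process exactly as arbitration is. $\ar$ orients every concurrent pair, but $\vis$ may relate a concurrent pair in neither direction, so use one bit per direction.
\item State the final decidability step explicitly: WS1S/MONA for finite abstract executions, but B\"uchi's decidability of full S1S in the $\omega$ case. There, HistMSO set variables (e.g.\ in $\mathsf{Finite}$) range over infinite sets.
\end{itemize}
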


\section{Cutwidth complexity of histories\label{sec:cutwidth}}
In this section we show that the partial order $\rb$ among operations of a
given \underline{finite} history is the transitive closure of a graph whose cutwidth is bounded
by the square of the number of processes; since the treewidth is bounded
by the cutwidth, this result induces in particular a tree decomposition of
histories bounded by the number of processes. We refer the interested the reader to
second author’s previous work~\cite{DBLP:journals/pacmpl/GiustoFLL23}, and more generally to Courcelle's theorem~\cite{Courcelle10}, to better understand how this question is related to the decidability of satisfiability checking or to local, PTIME model-checking. Proofs omitted here can be found in Appendix~\ref{app:proofs}.
In this section, all histories are implicitly assumed to be \underline{finite}.

\subsection{Problem statement}\label{sec:graphs-finite-histories}
% A history of a program execution can be perceived as a timeline gathering different data updates. Key information held in a timeline includes when operations start and return, and the order in which they are performed. Since the \textsl{returns-before} relation holds exactly this information, it will serve as the foundation of our graphs. Graphs are data structures that relate some elements (referred to as \textsl{vertices} or \textsl{nodes}) according to some relation represented by \textsl{edges}. In this work, we focus on graphs that are restricted to edges having exactly two end-points. Since edges are intended to represent the \textsl{returns-before} relation, the \textsl{vertices} of these graphs correspond precisely to the operations in a history. 

We adopt the conventional notation for a graph $G=(V,E)$ with $V$ the set of vertices and $E$, where $E$ is a subset of $V\times V$, or equivalently a binary relation on $V$. The rb-graph $G_H$ associated with an history $H$ is $G_H=(H,\rb)$.

The rb-graph of an history is overly dense, with redundant information. 
This redundancy makes it impractical to define a bounded tree decomposition of such a graph. We therefore aim to define a new binary relation $\rightarrow$ on operations with the two following properties.
\begin{itemize}
	\item $\rb$ is the transitive closure of $\rightarrow$.
	As a consequence, the MSO theory of $G_H$ can be reduced to the one of $G=(H,\rightarrow)$.
	\item $(H,\rightarrow)$ admits a tree decomposition whose width is bounded by a number $n$ that only depends on the meta parameters (more precisely, it will only depend on $\mathbb{P}$)
\end{itemize}
The graph $(H,\rightarrow)$ is called a generator of $(H,\rb)$.
We will achieve something even stronger than what we just stated, 
as we will instead show that the generator $G=(H,\rightarrow)$ is of bounded \emph{cutwidth}. 

The cutwidth of an undirected graph $G$ is the minimum \mbox{$k \in \mathbb{N}$} such that for some linear ordering of the vertices of ${G}$ along a horizontal axis, every vertical cut intersects at most $k$ edges. 
Using the cutwidth is advantageous in our work, as vertices of graphs of history already represent a timeline, which is a linear ordering. 
To rigouroulsy define the cutwidth of a graph, it is necessary to introduce the notion of \textsl{cut} of a graph. 
Let $G=(V,E)$ be a graph with $n$ vertices $v_1,\ldots,v_n$, and $\sigma$ a permutation\footnote{in other words, a linear ordering/an enumeration of the vertices of $G$} of $\{1,\ldots,n\}$, and $\ell\in\{1,\ldots n\}$.
$$
Cut_\ell(\sigma,G)\eqdef (\{v_{\sigma(1)},...,v_{\sigma(\ell)}\},\{v_{\sigma(\ell+1)},...,v_{\sigma(n)}\})
$$
defines a \textsl{cut} of $G$. An edge $(u,v)$ crosses a cut $Cut_\ell(\sigma,G)=(L,R)$, if either $(u,v) \in L\times R$ or $(u,v)\in R\times L$. 
For $G=(V,E)$ a graph, let $\mathsf{Perm}(V)$ denote the set of all possible permutations of the vertices of $G$. 
$$
cutwidth(G) \eqdef \displaystyle \min_{\sigma\in \mathsf{Perm}(V)}\; \max_{1\leq\ell\leq n} \big|\big\{\;
(u,v)\;\big| \; (u,v) \mbox{ crosses } Cut_\ell(\sigma,G)
\;\big\}\big|.
$$

In the remainder, we first construct the generator graph $G$ of a given history, then we bound its cutwidth. As explained above, we will bound its cutwidth through a bound on $\max_{1\leq\ell\leq n} \big|\big\{\;
(u,v)\;\big| \; (u,v) \mbox{ crosses } Cut_\ell(\sigma,G)
\;\big\}\big|$ for the linear ordering $\sigma=ord(H)$ defined by the timestamps of the operations:
$$
ord(H) \eqdef (a_1,\ldots,a_n) \qquad\mbox{ such that for all } i<j, a_i.stime <a_j.stime.
$$

\subsection{Constructive definition of the subgraph}
We now define the relation $\rightarrow$ using an algorithm that iteratively adds new edges to an edge set $E$.  We write $\opa\patharrow\opb$ to mean that there exists a path from $\opa$ to $\opb$ in the graph $G=(V,E)$.  Remember that we aim at $\rb$ being the transitive closure of $\rightarrow$.

The set of all direct successors of $\opa$, written as $\succs(\opa)$, is defined as $\succs(\opa) = \{\opb \mid \opb \in H  \land \opa\dsucp{}\opb \} \subset H$ (see Definition~\ref{def:successor}). 
Observe that for all $\opa \in H$, we have $|\succs(\opa)| \leq |\mathbb{P}|$ because any operation has at most one direct successor on each process. 
The algorithm operates as follows. It takes as input $\textsl{ord}(H)$, where $H\in\historyset$, and traverses  $\textsl{ord}(H)$ in reverse, starting from the last element and proceeding to the first. For each operation, its direct successors are extracted, sorted according to \textsl{ord}, and then processed one by one. If a path to a given successor already exists in the graph, no action is taken; otherwise, a direct edge to that successor is added. This procedure ensures that the number of edges is minimized. The algorithm is formalized as follows.

\begin{algorithm}[H]
\renewcommand{\thealgorithm}{} % removes numbering
\caption{Building a graph of finite history}\label{alg:alg1}
\begin{multicols}{2}
\begin{algorithmic}[1]
\Require $H\in\historyset$ with $\textsl{ord}(H)=(\opa_1,...,\opa_n)$, \mbox{$n \in \mathbb{N}$}
\Ensure $G=(H,\rightarrow)$
\State $i \gets n-1$
\State $E\gets\emptyset$
\State $G\gets (H,E)$
\While{$i > 0$}
\State $S \gets (b_1,...,b_m)=\textsl{ord}(\succs(\opa_i))$
\For{$j\in\{1,...,m\}$}
\If{$\neg(\opa_i\patharrow\opb_m)$}
	\State $E \gets E \cup \{ (\opa_i,\opb_m) \}$
	\State $G \gets (H,E)$
\EndIf
\EndFor
\State $i \gets i-1$
\EndWhile
\end{algorithmic}
\end{multicols}
\end{algorithm}

\begin{remark} 
    When building a directed graph as above, each directed edge can be interpreted as a \say{jump forward in time} —-- that is, traversing an edge corresponds to moving along the timeline of certain operations in the history.
\end{remark}

We denote by $\algone(H)$ the graph of the finite history $H$ computed by this algorithm. Let $\algoneset_m,m\in\mathbb{N}$ denote the set of all such graphs for histories in  $\historyset$ with $|\mathbb{P}|= m $. It is the finite execution graph set; for $m\in\mathbb{N}$,
\[
\algoneset_m \eqdef \big\{\;  \algone(H) \;\mid \;H \in \historyset \text{ and }|\mathbb{P}|= m  \;\big\}
\]

\subsection{Bounding the cutwidth of histories}
In this section, we prove that $\algoneset$ has uniformly bounded cutwidth, which, as a by-product, establishes the correctness of the algorithm.

\subsubsection{Preliminary results}

It is worth noting that both $\rb$ and $\patharrow$ are transitive. Indeed, given $\opa \rb \opb$ and $\opb \rb \opc$,   it follows naturally that $\opa\rb\opc$. Similarly, for  $\patharrow$, if $\opd \patharrow \ope$ and $\ope \patharrow \opf$, then $\opd \patharrow \opf$.

For $\opa$ a vertex of a graph, we denote its out-degree as $deg_{out}(\opa)$, and its in-degree as $deg_{in}(\opa)$. The \textsl{out-degree} of a vertex is the number of edges connected to it that are directed away from the vertex. The \textsl{in-degree} is the number of connected edges directed towards the vertex.

\begin{lemma}\label{lemma:degout}
Let $G=(H,\rightarrow)\in\algoneset_m,m\in\mathbb{N}$. Then, for all $\opa\in H$, \mbox{$deg_{out}(\opa) \leq m$}.
\end{lemma}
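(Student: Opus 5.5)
The plan is to read the bound directly off the construction of $\algone(H)$: every edge of $G$ is added by the algorithm, and each edge is added in a very constrained way.

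\emph{Step 1: localize the edges of $\opa$.} Fix $\opa\in H$. Edges are added only at line 8, and only when the current vertex $\opa_i$ is the source. Since the outer loop visits each index $i$ exactly once, every edge leaving $\opa$ is added during the single iteration in which $\opa_i=\opa$. No later or earlier iteration adds an edge of the form $(\opa,\cdot)$. The edge set is only ever enlarged, never rewired.

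\emph{Step 2: bound the candidate targets.} During that iteration, the inner loop ranges over $S=\textsl{ord}(\succs(\opa))$, and at most one edge is added per element of $S$. (The pseudo-code writes $\opb_m$ where $\opb_j$ is clearly intended; either reading adds at most one edge per inner step.) Since $E$ is a set, the targets are elements of $\succs(\opa)$, so $deg_{out}(\opa)\le|\succs(\opa)|$.

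\emph{Step 3: bound $|\succs(\opa)|$.} I will use the observation stated just before the algorithm: $|\succs(\opa)|\le|\mathbb{P}|=m$. It holds because, by Definition~\ref{def:successor}, $\opa\dsucp{p}\opb$ forces $\opb$ to be the \emph{first} operation on $p$ with $\opa\rb\opb$. The operations of a process are pairwise sequential and extremity-disjoint (Definition~\ref{def:history}), so they are totally ordered by start time. Hence this first operation is unique when it exists, giving at most one direct successor per process. Chaining Steps 2 and 3 yields $deg_{out}(\opa)\le m$.

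\emph{Anticipated obstacle.} There is no real obstacle; the only care needed is in Step 1. The loop starts at $i=n-1$ and stops at $i>0$, so vertices $\opa_n$ and $\opa_0$/$\opa_1$ (depending on indexing) may never be processed. They then simply have out-degree $0\le m$, so the bound holds trivially for them.
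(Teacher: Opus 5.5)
Your proof is correct and takes the same approach as the paper. Every out-edge of $\opa$ is added during the single iteration that processes $\opa$, each such edge targets an element of $\succs(\opa)$, and $|\succs(\opa)|\leq m$ because there is at most one direct successor per process. You also make explicit the justification of that last bound and the harmless $\opb_m$/$\opb_j$ and loop-index glitches, which the paper's short argument leaves implicit.
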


\begin{proposition}\label{prop:finishesBeforeCollection}
Let $H\in\historyset$ be a history. For $\opa,\opb \in H$,
\begin{center}
$\opa\rb\opb \Leftrightarrow $ there exists a tuple $(\oph_1,...,\oph_q),\; \oph_i \in H, q\in\mathbb{N}$ such that 
$\opa \dsucp{p} \oph_1 \dsucp{p} \hdots \dsucp{p} \oph_q \dsucp{p} \opb$ with  $p:=\opb.proc$.
\end{center}
\end{proposition}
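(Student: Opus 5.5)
The plan is to prove the two directions separately, with $p := \opb.proc$ fixed throughout. I read the tuple as possibly empty ($q=0$), in which case the chain is simply $\opa\dsucp{p}\opb$.

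\emph{Direction ($\Leftarrow$).} This is immediate. By Definition~\ref{def:successor}, $x\dsucp{p}y$ implies $x\rb y$. So every link of the chain $\opa \dsucp{p} \oph_1 \dsucp{p} \cdots \dsucp{p} \oph_q \dsucp{p} \opb$ is an instance of $\rb$. Since $\rb$ is transitive ($x.rtime<y.stime<y.rtime<z.stime$), we get $\opa\rb\opb$.

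\emph{Direction ($\Rightarrow$).} I first record a structural fact about a single process. Let $H_p = \{x\in H \mid x.proc=p\}$. By Definition~\ref{def:history}, any two distinct operations of $H_p$ are sequential, and extremity-disjointness makes the inequalities strict. So $\rb$ restricted to $H_p$ is a strict total order, which coincides with the order by $stime$. This is the session order on $p$.

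Since $H$ is finite, $H_p$ is finite, so "the first operation on $p$ satisfying a property" is well defined whenever some operation on $p$ satisfies it. The same holds for $\omega$-histories by non-Zenoness, since only finitely many operations start before $\opb.stime$.

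Now assume $\opa\rb\opb$.
\begin{itemize}
\item The set $\{x\in H_p \mid \opa\rb x\}$ contains $\opb$, so it has a first element $\oph_1'$ with $\opa\dsucp{p}\oph_1'$.
\item Minimality gives $\oph_1'.stime\le \opb.stime$. So either $\oph_1'=\opb$, and we are done with $q=0$, or $\oph_1' \sessionorder \opb$.
\item In the latter case, enumerate the operations of $H_p$ lying between $\oph_1'$ and $\opb$ in session order as $\oph_1'=g_1, g_2,\ldots,g_r=\opb$.
\item For each consecutive pair, $g_i\rb g_{i+1}$ holds, and no operation of $H_p$ lies strictly between them. Any $x\in H_p$ with $g_i\rb x$ satisfies $x.stime\ge g_{i+1}.stime$ by totality of session order. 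Hence $g_{i+1}$ is the first operation on $p$ after $g_i$, that is, $g_i\dsucp{p}g_{i+1}$.
\item Setting $(\oph_1,\ldots,\oph_q)=(g_1,\ldots,g_{r-1})$ gives the required chain ending in $\opb$.
\end{itemize}

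\emph{Main obstacle.} The only delicate step is justifying that $\rb$ is total on a single process and that "first" is well defined. This is where Definition~\ref{def:history} (sequentiality within a session, extremity-disjointness) and finiteness or non-Zenoness are used. Everything else is bookkeeping along the session order.
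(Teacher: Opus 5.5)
Your proof is correct and follows the paper's route. For ($\Rightarrow$) you chain the operations on $\opb.proc$ between $\opa$'s return and $\opb$'s start by direct succession, and for ($\Leftarrow$) you compose the $\rb$ links by transitivity; you also make explicit what the paper leaves implicit, namely that $\rb$ is total on a single process (by sequentiality within a session), so ``first'' is well defined and operations that are consecutive in session order are direct successors.
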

%%%%%%%%%%%%%%%%%%%%%%%%%%%%%%%%%%%%%%%%%%%%%%%%%%%%%%

\begin{proposition}\label{prop:looparrowPATH}
Let $(H,\rightarrow)\in\algoneset_m,m\in\mathbb{N}$. For \mbox{$\opa,\opb \in H$}, $\opa\rb\opb \Leftrightarrow \opa\patharrow\opb$.
\end{proposition}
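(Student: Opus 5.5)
We prove the two inclusions separately. The direction $\opa\patharrow\opb \Rightarrow \opa\rb\opb$ is a soundness statement about the edges the algorithm adds. The direction $\opa\rb\opb \Rightarrow \opa\patharrow\opb$ is a completeness statement, which reduces, via Proposition~\ref{prop:finishesBeforeCollection}, to showing that every direct-succession pair is connected by a path in $\algone(H)$. Throughout, we read the guard and update of the inner loop as referring to the current successor $\opb_j$, not $\opb_m$; this is clearly the intended meaning.

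For soundness, we check that every edge $(\opa_i,\opb)$ added to $E$ satisfies $\opb\in\succs(\opa_i)$, hence $\opa_i\dsuc\opb$. By Definition~\ref{def:successor}, this gives $\opa_i\rb\opb$. So every edge of $\rightarrow$ lies in $\rb$. A path $\opa\patharrow\opb$ is a finite chain of such edges, so transitivity of $\rb$ (noted in the preliminaries) yields $\opa\rb\opb$.

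For completeness, we first prove the invariant that for every $\opa\in H$ and every $\opb\in\succs(\opa)$, the final graph has $\opa\patharrow\opb$. Take $\opa=\opa_i$. When the outer loop reaches index $i$, the inner loop visits every element of $\textsl{ord}(\succs(\opa_i))$. For each such $\opb$, either a path from $\opa_i$ to $\opb$ already exists, or the edge $(\opa_i,\opb)$ is added at that moment. In both cases a path exists right after this step. The algorithm never removes edges, so $E$ only grows and the path survives to the end.

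We must also check that the index $n$, which the loop skips, needs no processing. If $\opa_n\rb\opb$, then $\opb.stime>\opa_n.rtime>\opa_n.stime$, so $\opb$ would come after $\opa_n$ in $\textsl{ord}(H)$, which is impossible. Hence $\succs(\opa_n)=\emptyset$.

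Now assume $\opa\rb\opb$ and set $p=\opb.proc$. Proposition~\ref{prop:finishesBeforeCollection} gives a chain $\opa\dsucp{p}\oph_1\dsucp{p}\cdots\dsucp{p}\oph_q\dsucp{p}\opb$. Each link is a direct-succession pair, so by the invariant each link is realised by a path. Transitivity of $\patharrow$ then gives $\opa\patharrow\opb$.

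The only delicate point is the invariant. We must make sure that checking $\neg(\opa_i\patharrow\opb)$ against the current, partial graph does not lose anything. This is handled by monotonicity of $E$: a path found at step $i$ remains a path in the final graph. No property of the processing order is needed for correctness; the order matters only for the cutwidth bound proved later.
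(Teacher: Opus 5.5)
Your proof is correct and follows the paper's argument. For $\opa\rb\opb\Rightarrow\opa\patharrow\opb$ you combine Proposition~\ref{prop:finishesBeforeCollection} with the invariant that lines 7--8 leave a path to every direct successor, and you are more careful than the paper about monotonicity of $E$, the $\opb_m$/$\opb_j$ typo, and the unprocessed last operation $\opa_n$. For the converse, the paper rewrites an arbitrary path as a chain $\opa\dsucp{p}\oph_1\dsucp{p}\cdots\dsucp{p}\opb$ with $p=\opb.proc$ and re-invokes Proposition~\ref{prop:finishesBeforeCollection}, which is not literally justified since intermediate vertices may lie on other processes; your direct use of ${\rightarrow}\subseteq{\rb}$ together with transitivity of $\rb$ is the clean and correct version of the same idea.
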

We have established that for a history $H\in\historyset$, each vertex of $G=\algone(H)$ has at most $m=|\mathbb{P}|$ outgoing edges. We now turn to the complementary property: showing that each vertex has at most $m$ incoming edges.
\begin{lemma}\label{lemma:degin}
Let $G=(H,\rightarrow)\in\algoneset_m,m\in\mathbb{N}$. Then, for all $\opa\in H$, \mbox{$deg_{in}(\opa) \leq m$}.
\end{lemma}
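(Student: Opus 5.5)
We show that for each process $p$, at most one edge of $G=\algone(H)$ enters $\opb$ from an operation running on $p$; since there are $m$ processes, this gives $deg_{in}(\opb)\leq m$.

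First, we record what edges look like. By construction of $\algone$, every edge $\opa\rightarrow\opb$ satisfies $\opa\dsuc\opb$, and in particular $\opa\rb\opb$. Now fix $\opb$ and a process $p$. Suppose, for contradiction, that $\opa\rightarrow\opb$ and $\opa'\rightarrow\opb$ are two edges with $\opa\neq\opa'$ and $\opa.proc=\opa'.proc=p$. Operations on the same process are sequential (Definition~\ref{def:history}), so without loss of generality $\opa\sessionorder\opa'$, that is, $\opa\rb\opa'$. We will show that the edge $\opa\rightarrow\opb$ cannot have been added.

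The key point is the order in which the algorithm processes vertices. It traverses $ord(H)$ backwards, so $\opa'$, which starts after $\opa$, is treated before $\opa$. Hence, when the algorithm handles $\opa$, the edge $\opa'\rightarrow\opb$ is already in $E$. Next we need a path from $\opa$ to $\opa'$ in the current graph. We have $\opa\rb\opa'$. Every vertex $\opc$ with $\opa\rb\opc$ starts after $\opa$, so it was processed earlier. The relevant direct successors of $\opa$ are also handled before the check on $\opb$ is made. Following Proposition~\ref{prop:finishesBeforeCollection}, the chain $\opa\dsucp{p}\oph_1\dsucp{p}\cdots\dsucp{p}\opa'$ consists of operations on $p$ that start after $\opa$. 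The edges, or paths, realising the links $\oph_i\dsucp{p}\oph_{i+1}$ are inserted while processing each $\oph_i$, and these steps occur before $\opa$ is processed.

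The first link $\opa\dsucp{p}\oph_1$ is itself a successor of $\opa$. This is the delicate step. We need that $\oph_1$, the first operation on $p$ after $\opa$, comes in $ord$ no later than $\opb$. Then, in the sorted loop over $\succs(\opa)$, the successor $\oph_1$ is handled, and a path to it is created, before $\opb$. That path, followed by the path from $\oph_1$ to $\opa'$ and the edge $\opa'\rightarrow\opb$, gives $\opa\patharrow\opb$ at the moment $\opb$ is examined, so the edge $\opa\rightarrow\opb$ is not added.

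To justify the ordering, note that $\opa\rb\opa'\rb\opb$ and $\oph_1$ starts no later than $\opa'$. Therefore $\oph_1.stime\leq\opa'.stime<\opb.stime$, as required. One also has to check that the case $\oph_1=\opa'$ causes no trouble.

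The main obstacle is making this timing invariant rigorous: at the moment $\opa$ examines $\opb$, every $\rb$-path from vertices processed earlier, and from the earlier successors of $\opa$, is already realised in $E$. We would prove this by induction along the algorithm, which is essentially the invariant behind Proposition~\ref{prop:looparrowPATH}, restricted to vertices already processed. Once that invariant holds, the contradiction shows that at most one in-edge of $\opb$ comes from each process, and hence $deg_{in}(\opb)\leq m$.
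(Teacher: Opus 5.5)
Your proof is correct and follows the paper's argument. You assume two in-edges of $\opb$ coming from operations $\opa\rb\opa'$ on the same process, and use the first successor $\oph_1$ of $\opa$ on that process to obtain $\opa\patharrow\oph_1\patharrow\opa'\rightarrow\opb$ before the edge $\opa\rightarrow\opb$ would be added.

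Your check that $\oph_1.stime\leq\opa'.stime<\opb.stime$, so that $\oph_1$ is examined before $\opb$ in the sorted successor loop, makes rigorous a step the paper glosses over: the paper only asserts that the path exists ``upon completion'' of processing $\opa$, which is too late to block the edge. The invariant you flag as the main obstacle is already discharged by your own link-by-link argument: each $\oph_i\patharrow\oph_{i+1}$ is created while processing $\oph_i$, and all of that happens before $\opa$ is processed.
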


\subsubsection{Main proof}

To prove cutwidth boundedness of a graph $G$, we need to define two specific sets of nodes  based on the definition of \textsl{cut}. These sets are as follows: the set containing the last operations launched on each process before the cut $\ell$ is denoted by $\Gamma_\ell(G)$, and the set containing the first operations launched on each process after the cut $\ell$ is denoted by $\Lambda_\ell(G)$. 
% In order to formally define $\Gamma_\ell(G)$ and $\Lambda_\ell(G)$, we first need to introduce the sets $\gamma_p(L)$ and $\lambda_p(L)$: the singleton $\gamma_p(L)$ contains the last operation of $L$ to be launched on process $p$, and the singleton $\lambda_p(R)$ contains the first operation of $R$ to be launched on process $p$.

Let $H\in\historyset$ be a history. For a set $O\subseteq H$, and a process $p$, 
we write $\gamma_p(O)$ to denote the singleton containing the last operation of $O$ to be launched on process $p$, and $\lambda_p(O)$ to denote the singleton containing the first operation of $O$ to be launched on process $p$. Formally,
$$
\begin{array}{ccl}
    \gamma_p(O) & \eqdef & \{a\in O\mid a.proc=p \mbox{ and }\forall \opb\in O.\;\opb\ssrel\opa\Rightarrow \opb.stime < \opa.stime \} \\
    \lambda_p(O) & \eqdef & \{a\in O\mid  a.proc=p \mbox{ and }\forall \opb\in O.\;\opb\ssrel\opa\Rightarrow \opa.stime < \opb.stime \}
\end{array}
$$

Let us fix $1\leq\ell\leq n$ and let $(L,R)=\textsl{Cut}_\ell(\textsl{ord},G)$.
The sets \mbox{$\Gamma_\ell(G)$} and \mbox{$\Lambda_\ell(G)$} are now defined as\footnote{
Observe that it is possible to have \mbox{$\gamma_p(L)=\emptyset$} and/or \mbox{$\lambda_p(R)=\emptyset$}.}
\begin{gather*}
\Gamma_\ell(G) := \displaystyle\bigcup_{p \in \mathbb{P}}\gamma_p(L)
\quad\text{and}\quad
\Lambda_\ell(G) := \displaystyle\bigcup_{p \in \mathbb{P}}\lambda_p(R)
\end{gather*}

\noindent By definition $|\Gamma_\ell(G)|,\;|\Lambda_\ell(G)|\leq|\mathbb{P}|$. We now introduce notation for the sets of remaining nodes not included in $\Gamma_\ell$ or $\Lambda_\ell$: $L_\ell := L \setminus \Gamma_\ell $ and $R_\ell := R \setminus \Lambda_\ell$.

\begin{lemma}\label{prop:crossRedges}
Let a history $H\in\historyset$ and $G=(H,\rightarrow)\in\algone(H)$. For a cut 
\mbox{$(L,R)=\textsl{Cut}_\ell(\textsl{ord},G)$}, with \mbox{$1\leq\ell\leq|H|$}, there is no edge that crosses the cut $\ell$ from $R$ to $L$: $\nexists\; \opa,\opb\in H:\; \opa\rightarrow\opb\;\land\;\opa\in R\;\land\;\opb\in L$.
\end{lemma}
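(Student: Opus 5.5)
We argue by contradiction, using only two facts: every edge produced by the algorithm is an $\rb$ pair, and the cut respects start times. Suppose $\opa\rightarrow\opb$ with $\opa\in R$ and $\opb\in L$.

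First, every edge added by $\algone$ has the form $(\opa_i,\opb_j)$ with $\opb_j\in\succs(\opa_i)$, so $\opa_i\dsuc\opb_j$. By Definition~\ref{def:successor}, this gives $\opa_i\rb\opb_j$. Hence $\opa\rightarrow\opb$ implies $\opa\rb\opb$, that is, $\opa.rtime<\opb.stime$. (Alternatively, this follows from Proposition~\ref{prop:looparrowPATH}, since a single edge is a path.)

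Second, since $\opa.stime<\opa.rtime$ for every operation, we obtain $\opa.stime<\opa.rtime<\opb.stime$. In particular $\opa.stime<\opb.stime$.

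Finally, the cut is taken along $\sigma=\textsl{ord}(H)$, which lists operations by strictly increasing start time; start times are distinct by extremity-disjointness. So $L=\{\opa_1,\dots,\opa_\ell\}$ consists of the $\ell$ operations with the smallest start times. Since $\opb\in L$ and $\opa.stime<\opb.stime$, the operation $\opa$ precedes $\opb$ in $\textsl{ord}(H)$ and therefore also lies in $L$. This contradicts $\opa\in R$, as $L\cap R=\emptyset$.

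No step here is genuinely hard. The only point needing care is the first one: the pseudocode writes $\opb_m$ where $\opb_j$ is intended, so we should state explicitly that every added edge targets some element of $\succs(\opa_i)$, whichever index is meant. This already yields the $\rb$ inclusion without invoking any of the path-correctness results.
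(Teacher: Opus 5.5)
Your proof is correct and follows the paper's argument. Both observe that every edge targets a direct successor, so $\opa.rtime<\opb.stime$, and combine this with $\opa.stime<\opa.rtime$ and the start-time ordering of the cut; the only cosmetic difference is that the paper phrases the contradiction as $\opa.rtime<\opb.stime\leq\opa.stime$, whereas you conclude $\opa\in L$.
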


Let $G\in\algoneset_m,m\in\mathbb{N}$, and let \mbox{$(L,R)=\textsl{Cut}_\ell(\textsl{ord},G)$} be a cut of $G$.
By Lemma~\ref{prop:crossRedges} no edge originating from $R=\Lambda_\ell \cup R_\ell$ can cross the cut. Edges in $L_\ell\times R_\ell$ cannot crosse the cut either. However, edges in $L_\ell\times\Lambda_\ell$, $\Gamma_\ell\times\Lambda_\ell$ and $\Gamma_\ell\times R_\ell$ can cross the cut $\ell$. 

We now aim to prove that no edge in $L_\ell\times R_\ell$ crosses the cut $\ell$.

\begin{lemma}\label{lemma:noCrossingLlRl}
Let a history $H\in\historyset$ and $G=(H,\rightarrow)\in\algone(H)$. For a cut 
\mbox{$(L,R)=\textsl{Cut}_\ell(\textsl{ord},G)$}, with \mbox{$1\leq\ell\leq|H|$}, there is no edge that crosses the cut $\ell$ from $L_\ell(G)$ to $R_\ell(G)$: $\nexists\; \opa,\opb\in H:\; \opa\rightarrow\opb\;\land\;\opa\in L_\ell(G)\;\land\;\opb\in R_\ell(G)$.
\end{lemma}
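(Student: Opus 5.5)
The plan is to argue by contradiction, using two facts: every edge added by the algorithm joins an operation to one of its direct successors, and in $\textsl{ord}$ the cut splits operations by start time. Suppose $\opa\rightarrow\opb$ with $\opa\in L_\ell(G)$ and $\opb\in R_\ell(G)$, and let $p=\opb.proc$. By construction of $\algone(H)$, an edge $(\opa_i,x)$ is only added when $x\in\succs(\opa_i)$. (The pseudo-code writes $\opb_m$, but the intended index is $\opb_j$; only membership in $\succs$ matters here.) Hence $\opa\dsucp{p}\opb$. Concretely: $\opa\rb\opb$, and $\opb$ is the first operation on process $p$ such that $\opa\rb\opb$.

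Next I unpack the two membership hypotheses. Since $\opb\in R\setminus\Lambda_\ell$, the singleton $\lambda_p(R)$ is nonempty and equals some $\opb'\neq\opb$. This $\opb'$ is on process $p$ and satisfies $\opb'.stime<\opb.stime$. Since $\opa\in L\setminus\Gamma_\ell$, there is some $\opa'\in L$ with $\opa'\ssrel\opa$ and $\opa.stime<\opa'.stime$. Operations on the same process are sequential (Definition~\ref{def:history}), so this gives $\opa.rtime<\opa'.stime$. Finally, $L$ is an initial segment of $\textsl{ord}(H)$, so $\opa'\in L$ and $\opb'\in R$ give $\opa'.stime<\opb'.stime$.

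The contradiction then comes from the minimality of $\opb$. The operation $\opb'$ lies on process $p$ and strictly before $\opb$ in the session order. Since $\opb$ is the first operation on $p$ with $\opa\rb\opb$, we must have $\neg(\opa\rb\opb')$, that is, $\opa.rtime\not<\opb'.stime$. By extremity-disjointness this means $\opb'.stime<\opa.rtime$. Chaining the inequalities gives
$$\opb'.stime<\opa.rtime<\opa'.stime<\opb'.stime,$$
which is absurd. (If $\opa=\opb'$ were possible it would also be excluded, since $\opa\in L$ and $\opb'\in R$.)

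The only step needing care is the first one: extracting from the algorithm that edges are exactly direct-successor pairs with the correct process index $p=\opb.proc$. Once that is settled, the rest is the inequality chain above, which uses only sequentiality within a process, extremity-disjointness, and the fact that $\textsl{ord}$ sorts operations by start time.
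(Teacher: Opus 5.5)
Your proof is correct and follows the same route as the paper's. You take the last operation of $L$ on $\opa$'s process and the first operation of $R$ on $\opb$'s process, chain their start and return times, and contradict the minimality of $\opb$ as direct successor of $\opa$ on $\opb.proc$. The only differences are extra care on your side: you justify $\opa.rtime<\opa'.stime$ by within-process sequentiality, which the paper asserts ``by definition'', and you note the $\opb_m$/$\opb_j$ typo in the pseudo-code, which is harmless here.
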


\begin{theorem}\label{thm:cutwidthBounded}
Let \mbox{$G\in\algoneset_m,m\in\mathbb{N}$}. Then, $cutwidth(G)\leq 2m^2$.
\end{theorem}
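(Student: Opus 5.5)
We bound the cutwidth by exhibiting a single linear ordering, namely $\sigma=\textsl{ord}(H)$, and showing that every cut $Cut_\ell(\textsl{ord},G)=(L,R)$ is crossed by at most $2m^2$ edges. Since the cutwidth is a minimum over all orderings, this suffices. We fix $\ell$ and partition $L=\Gamma_\ell\cup L_\ell$ and $R=\Lambda_\ell\cup R_\ell$ as above, recalling that $|\Gamma_\ell|\leq m$ and $|\Lambda_\ell|\leq m$.

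Every crossing edge goes from $L$ to $R$ by Lemma~\ref{prop:crossRedges}. By Lemma~\ref{lemma:noCrossingLlRl}, no crossing edge lies in $L_\ell\times R_\ell$. Hence every crossing edge has its source in $\Gamma_\ell$ or its target in $\Lambda_\ell$, and
$$
|\{\text{crossing edges}\}| \;\leq\; |\{(\opa,\opb)\in E \mid \opa\in\Gamma_\ell\}| \;+\; |\{(\opa,\opb)\in E\mid \opb\in\Lambda_\ell\}|.
$$
The first term counts edges leaving $\Gamma_\ell$. By Lemma~\ref{lemma:degout}, each such vertex has out-degree at most $m$, so this term is at most $|\Gamma_\ell|\cdot m\leq m^2$. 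The second term counts edges entering $\Lambda_\ell$. By Lemma~\ref{lemma:degin}, each such vertex has in-degree at most $m$, so this term is at most $|\Lambda_\ell|\cdot m\leq m^2$. Edges in $\Gamma_\ell\times\Lambda_\ell$ are counted twice, but this only loosens the bound. Altogether each cut is crossed by at most $2m^2$ edges, which proves the statement.

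The cutwidth is defined on the underlying undirected graph, so each directed edge is counted once regardless of orientation, and the crossing condition is symmetric in $L$ and $R$. The only edge case is the last index $\ell=n$, where $R=\emptyset$ and the bound is trivial. There is no real obstacle left at this stage: the substantive work is in Lemmas~\ref{prop:crossRedges} and~\ref{lemma:noCrossingLlRl} and in the degree bounds, which we assume. What remains is the observation that the crossing edges are covered by the out-star of $\Gamma_\ell$ together with the in-star of $\Lambda_\ell$.
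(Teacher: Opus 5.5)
Your proof is correct and follows the paper's argument. You use the ordering $\textsl{ord}(H)$, Lemmas~\ref{prop:crossRedges} and~\ref{lemma:noCrossingLlRl} to force every crossing edge to leave $\Gamma_\ell$ or enter $\Lambda_\ell$, and Lemmas~\ref{lemma:degout} and~\ref{lemma:degin} to bound each part by $m^2$. The paper instead splits crossing edges disjointly by whether they land in $\Lambda_\ell$ or in $R_\ell$, where you take a union bound over the out-edges of $\Gamma_\ell$ and the in-edges of $\Lambda_\ell$; this is only a cosmetic difference.
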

\begin{proof}
Let a cut \mbox{$(L,R)=\textsl{Cut}_\ell(\textsl{ord},G)$}, with$1\leq\ell\leq|H|$. Let $\Gamma_\ell\eqdef \Gamma_\ell(G)$, $\Lambda_\ell\eqdef\Lambda_\ell(G)$, $L_\ell\eqdef L_\ell(G)$, $R_\ell\eqdef R_\ell(G)$. Cutwidth is traditionally defined for undirected graphs. However, by Lemma~\ref{prop:crossRedges}, only edges originating in $L$ can cross the cut at level $\ell$. Intuitively, this means that edges cross the cut only \say{from left to right}. For this reason, we retain the directed nature of the graph to maintain clarity throughout the proof. Thus, the core idea of the proof is to bound the number of edges that can \say{land} on the right-hand side of the cut; that is, the number of edges originating from $L_\ell\cup\Gamma_\ell$ and landing in $\Lambda_\ell\cup R_\ell$. 
We have $|\Lambda_\ell| \leq m$, and by Lemma~\ref{lemma:degin} each element of $\Lambda_\ell$ has at most $m$ incoming edges. Therefore there are at most $m^2$ edges \say{landing} in $\Lambda_\ell$.
By Lemma~\ref{lemma:noCrossingLlRl}, there are no edges in $L_\ell\times R_\ell$, which implies that each edge crossing the cut $\ell$ and landing in $R_\ell$ must originate from $\Gamma_\ell$. Since $|\Gamma_\ell| \leq m$ and Lemma~\ref{lemma:degout} states that each element of $\Gamma_\ell$ has at most $m$ outgoing edges, the number of edges in $\Gamma_\ell\times R_\ell$ is also bounded by $m^2$. This implies that at most $m^2$ edges cross the cut and land in $R_\ell$.
Adding the two bounds, we obtain a total of at most $m^2+m^2=2m^2$ crossing edges, which establishes the result.

\end{proof}

\section{Conclusion and Future Work}
We introduced HistMSO, a monadic second order logic over histories and abstract executions, the prominent execution model for replicated data systems. 
We showed that the logic can express 39 out of 42 consistency models inventoried by Viotti and Vukolic~\cite{ViottiVukolic}. 
We proposed a translation of HistMSO to MSO over words, leveraging MONA as a tool for automated reasoning on consistency models.
We established the soundness and completeness of this translation for all finite and $\omega$-infinite histories, assuming that all the meta-parameters (sets of processes, of values, and of replicated objects) were finite.
We extended this translation to abstract executions under two assumptions: real-time arbitration, and $k$-transient visibility.
From this translation, we derive the decidability of HistMSO on finite (respectively $\omega$-infinite) histories and on real-time, $k$-transient abstract executions.
As a direct application, we could fully automate the verification of the hierarchy of consistency models proposed by Viotti and Vukolic.
We also established a bound on the cutwidth complexity of the interval graph associated with a given history, grounding connections with previous work on bounded the tree width of some communication models under bounded asynchrony~\cite{DBLP:journals/pacmpl/GiustoFLL23}.

We believe that the main strength of our framework is to support reasoning about a broad class of consistency properties within a unified logical setting.
The other strength of our work is to introduce an encoding of execution traces suitable for consistency verification using finite state automata under minimal assumptions about the structure of these traces.
Beyond automated reasoning (i.e. deciding satisfiability of a HistMSO theory), our framework also allows to monitor~\cite{havelund2018runtime} an execution trace of an instrumented replicated data system, and possibly catch consistency violations of a system.
This monitoring is however \emph{offline} (the log trace must first be generated by the instrumented system, then model-checked), and \emph{centralised} (the model-checker needs to access the whole trace, not just the one of each process).
An interesting question, that has been successfully addressed in other automata-based framework, is whether online/runtime, and/or distributed monitoring would also be possible. We leave this question for future work, and for more immediate future work, we aim at benchmarking our approach with a tool implementation.

% Knowing that a trace can be represented by an automaton is valuable for several reasons. Since our framework is restricted to finite processes and histories, the execution of a program over these processes can be decomposed into a separate automaton per process. Each process is therefore associated with a finite automaton that captures the specific code fragments it executes. To verify that any history resulting from the parallel execution of these automata satisfies a given consistency model, we proceed in two steps: first we define a class of histories that captures all possible outcomes of the program; and then verify whether each of these histories satisfies the specified consistency property. These constraints can be formalized as intersections of finite automata. Since each finite automaton corresponds to a regular language, the problem reduces to checking whether the intersection language is empty. If the intersection is empty, then no execution of the program satisfies the consistency model.

% However, our work is currently restricted to finite domains. An immediate extension would be to investigate the decidability of MSO-definable properties over infinite traces. The current decidability result relies on an algorithmic construction, which is only applicable to finite traces. Nonetheless, if the underlying graph structure of execution traces can be expressed purely in logical terms --- using unions, intersections, and negations of relations, rather than an explicit algorithm --- then it is possible to extend the decidability result to infinite traces.

\begin{credits}
\subsubsection{\ackname} 
% Ce travail a bénéficié d'une aide du gouvernement français, gérée par l'Agence Nationale de la Recherche au titre du Pl­an d’investissement France 2030, dans le cadre du projet "UCA DS4H" portant la référence n° ANR-17-EURE-0004
This work was supported by a grant from the French government under
the "Investissements d’Avenir" program managed by the National
Research Agency (ANR), reference ANR-17-EURE-0004, and by
the project UCA DS4H.

\subsubsection{\discintname}
%It is now necessary to declare any competing interests or to specifically state that the authors have no competing interests. Please place the statement with a bold run-in heading in small font size beneath the (optional) acknowledgments\footnote{If EquinOCS, our proceedings submission system, is used, then the disclaimer can be provided directly in the system.}, for example: The authors have no competing interests to declare that are relevant to the content of this article. Or: Author A has received research grants from Company W. Author B has received a speaker honorarium from Company X and owns stock in Company Y. Author C is a member of committee Z.
The authors have no competing interests to declare that are relevant to the content of this article.
\end{credits}

\bibliographystyle{splncs04}
\bibliography{references}

@inproceedings{DBLP:conf/fm/ChevrouH0Q19,
  author       = {Florent Chevrou and
                  Aur{\'{e}}lie Hurault and
                  Shin Nakajima and
                  Philippe Qu{\'{e}}innec},
  editor       = {Emil Sekerinski et al},
  title        = {A Map of Asynchronous Communication Models},
  booktitle    = {Formal Methods. {FM} 2019 International Workshops - Porto, Portugal,
                  October 7-11, 2019, Revised Selected Papers, Part {II}},
  series       = {Lecture Notes in Computer Science},
  volume       = {12233},
  pages        = {307--322},
  publisher    = {Springer},
  year         = {2019},
  url          = {https://doi.org/10.1007/978-3-030-54997-8\_20},
  doi          = {10.1007/978-3-030-54997-8\_20},
  bibsource    = {dblp computer science bibliography, https://dblp.org}
}

@article{DBLP:journals/pacmpl/GiustoFLL23,
	author = {Cinzia {Di Giusto} and Davide Ferr{\'{e}} and Laetitia Laversa and {\'{E}}tienne Lozes},
	bibsource = {dblp computer science bibliography, https://dblp.org},
	doi = {10.1145/3571248},
	journal = {Proc. {ACM} Program. Lang.},
	number = {{POPL}},
	pages = {1601--1627},
	title = {A Partial Order View of Message-Passing Communication Models},
	url = {https://doi.org/10.1145/3571248},
	volume = {7},
	year = {2023}}

@article{ENGELS2002253,
	author = {A.G. Engels and S. Mauw and M.A. Reniers},
	doi = {https://doi.org/10.1016/S0167-6423(02)00022-9},
	issn = {0167-6423},
	journal = {Science of Computer Programming},
	number = {3},
	pages = {253-292},
	title = {A hierarchy of communication models for Message Sequence Charts},
	url = {https://www.sciencedirect.com/science/article/pii/S0167642302000229},
	volume = {44},
	year = {2002}}

@inproceedings{Courcelle10,
	Author = {Bruno Courcelle},
	Booktitle = {FSTTCS},
	Ee = {http://dx.doi.org/10.4230/LIPIcs.FSTTCS.2010.13},
	Pages = {13-29},
	Series = {LIPIcs},
	Title = {Special tree-width and the verification of monadic second-order graph properties},
	Volume = {8},
	Year = {2010},
	address         = {Chennai, India},
	publisher       = {Schloss Dagstuhl - Leibniz-Zentrum f{\"{u}}r Informatik},
}

@misc{ViottiVukolic,
  author = {Paolo Viotti and Marko Vukolic},
  title = {Consistency in Non-Transactional Distributed Storage Systems.},
  year = {2016},
  howpublished = {\url{http://vukolic.com/consistency-survey.pdf}}
}

@misc{PrinciplesOfEventualConsistency,
  author = {Sebastian Burckhardt},
  title = {Principles of Eventual Consistency},
  year = {2014},
  howpublished = {\url{https://www.microsoft.com/en-us/research/wp-content/uploads/2016/02/final-printversion-10-5-14.pdf}}
}

@article{DBLP:journals/corr/abs-2510-21304,
  author       = {Hagit Attiya and
                  Constantin Enea and
                  Enrique Rom{\'{a}}n{-}Calvo},
  title        = {Arbitration-Free Consistency is Available (and Vice Versa)},
  journal      = {CoRR},
  volume       = {abs/2510.21304},
  year         = {2025},
  url          = {https://doi.org/10.48550/arXiv.2510.21304},
  doi          = {10.48550/ARXIV.2510.21304},
  eprinttype    = {arXiv},
  eprint       = {2510.21304},
  bibsource    = {dblp computer science bibliography, https://dblp.org}
}

@InProceedings{demri-nowak-05-transfinite-words,
author="Demri, St{\'e}phane
and Nowak, David",
editor="Peled, Doron A.
and Tsay, Yih-Kuen",
title="Reasoning About Transfinite Sequences",
booktitle="Automated Technology for Verification and Analysis",
year="2005",
publisher="Springer Berlin Heidelberg",
address="Berlin, Heidelberg",
pages="248--262",
isbn="978-3-540-31969-6"
}

@inproceedings{DBLP:conf/spaa/SinglaRH97,
  author       = {Aman Singla and
                  Umakishore Ramachandran and
                  Jessica K. Hodgins},
  title        = {Temporal Notions of Synchronization and Consistency in Beehive},
  booktitle    = {Proceedings of the Ninth Annual ACM Symposium on Parallel Algorithms and Architectures,
                  SPAA 1997, Santa Barbara, California, USA, July 23-25, 1997},
  pages        = {211--220},
  publisher    = {ACM},
  year         = {1997},
  url          = {https://doi.org/10.1145/258492.258513},
  doi          = {10.1145/258492.258513},
}

@article{HerlihyW90,
  author       = {Maurice Herlihy and
                  Jeannette M. Wing},
  title        = {Linearizability: A Correctness Condition for Concurrent Objects},
  journal      = {ACM Transactions on Programming Languages and Systems (TOPLAS)},
  volume       = {12},
  number       = {3},
  pages        = {463--492},
  year         = {1990},
  publisher    = {ACM},
  url          = {https://doi.org/10.1145/78969.78972},
  doi          = {10.1145/78969.78972},
}

@misc{MONA,
title={MONA},
author={Tobias Nipkow and Andreas Weber},
  url= {https://www.brics.dk/mona/index.html},
  year={2020}
}

@misc{MONAguide,
title={MONA Version 1.4 User Manual},
author={Nils Klarlund and Anders Møller},
url={https://www.brics.dk/mona/mona14.pdf},
year={2001}
}

@misc{Burckhardt13,
  author = {Sebastian Burckhardt, Alexey Gotsman, Hongseok Yang},
  howpublished = {Technical Report MSR-TR-2013-39},
  title = {Understanding eventual consistency},
  year = {2013}
}

@article{alur1994theory,
  title={A theory of timed automata},
  author={Alur, Rajeev and Dill, David L},
  journal={Theoretical computer science},
  volume={126},
  number={2},
  pages={183--235},
  year={1994},
  publisher={Elsevier}
}

@inproceedings{bengtsson1995uppaal,
  title={UPPAAL—a tool suite for automatic verification of real-time systems},
  author={Bengtsson, Johan and Larsen, Kim and Larsson, Fredrik and Pettersson, Paul and Yi, Wang},
  booktitle={International hybrid systems workshop},
  pages={232--243},
  year={1995},
  organization={Springer}
}

@inproceedings{bozga1998kronos,
  title={Kronos: A model-checking tool for real-time systems},
  author={Bozga, Marius and Daws, Conrado and Maler, Oded and Olivero, Alfredo and Tripakis, Stavros and Yovine, Sergio},
  booktitle={International Conference on Computer Aided Verification},
  pages={546--550},
  year={1998},
  organization={Springer}
}

@inproceedings{kovacs2013first,
  title={First-order theorem proving and Vampire},
  author={Kov{\'a}cs, Laura and Voronkov, Andrei},
  booktitle={International Conference on Computer Aided Verification},
  pages={1--35},
  year={2013},
  organization={Springer}
}

@inproceedings{de2008z3,
  title={Z3: An efficient SMT solver},
  author={De Moura, Leonardo and Bj{\o}rner, Nikolaj},
  booktitle={International conference on Tools and Algorithms for the Construction and Analysis of Systems},
  pages={337--340},
  year={2008},
  organization={Springer}
}

@article{holzmann2007design,
  title={The design of a multicore extension of the SPIN model checker},
  author={Holzmann, Gerard J and Bosnacki, Dragan},
  journal={IEEE Transactions on Software Engineering},
  volume={33},
  number={10},
  pages={659--674},
  year={2007},
  publisher={IEEE}
}

@inproceedings{yu1999model,
  title={Model checking TLA+ specifications},
  author={Yu, Yuan and Manolios, Panagiotis and Lamport, Leslie},
  booktitle={Advanced research working conference on correct hardware design and verification methods},
  pages={54--66},
  year={1999},
  organization={Springer}
}

@inproceedings{weidenbach2009spass,
  title={SPASS Version 3.5},
  author={Weidenbach, Christoph and Dimova, Dilyana and Fietzke, Arnaud and Kumar, Rohit and Suda, Martin and Wischnewski, Patrick},
  booktitle={International Conference on Automated Deduction},
  pages={140--145},
  year={2009},
  organization={Springer}
}

@inproceedings{havelund2018runtime,
  title={Runtime verification-17 years later},
  author={Havelund, Klaus and Ro{\c{s}}u, Grigore},
  booktitle={International Conference on Runtime Verification},
  pages={3--17},
  year={2018},
  organization={Springer}
}

@InCollection{KlaNieSun:casestudyautver,
  author = 	 {Klarlund, N. and Nielsen, M. and Sunesen, K.},
  title = 	 {A case study in automated verification
		based on trace abstractions},
  booktitle = 	 {Formal System Specification,
       		 The RPC-Memory Specification Case Study},
  publisher =	 {Springer Verlag},
  year =	 1996,
  editor =	 {Broy, M. and Merz, S. and Spies, K.},
  volume =	 1169,
  series =	 {LNCS},
  pages  = {341-374},
}

@INPROCEEDINGS{JenJoeKlaSch:AutVerPoinProgMonSecOrdLog,
	AUTHOR = {Jacob L. Jensen and Michael E. Joergensen and Nils Klarlund and Michael
I. Schwartzbach},
	TITLE = {Automatic Verification of Pointer Programs using Monadic Second-order Logic},
	BOOKTITLE = {PLDI '97},
	YEAR = {1997},
}

@Article{BasKla:BeyondFiniteHardware,
  author = 	 {Basin, D. and Klarlund, N.},
  title = 	 {Automata Based Symbolic Reasoning in
Hardware Verification},
  journal = 	 {Formal Methods In System Design},
  year = 	 1998,
  volume =	 13,
  pages =	 {255-288},
  note =	 {Extended version of: 
        ``Hardware verification using monadic 
              second-order logic," {\em CAV '95}, LNCS 939}}

@InProceedings{YakYak99,
  author = {Niels Damgaard and Nils Klarlund and Michael I. Schwartzbach},
  title = {{YakYak}: Parsing with Logical Side Constraints},
  booktitle = {Proceedings of DLT'99},
  year = {1999},
}

\appendix
\section{Additional Consistency Models}\label{app:more-consistency-models}
\subsubsection{Monotonic reads} states that \say{if a process has read a certain value $v$ from an object, any successive read operation
 will not return any value written before $v$}\cite{ViottiVukolic}. Formally, it is defined as follows.

$$
\begin{array}{rcl}
    a\xrightarrow{sorr} b & \eqdef & 
 a.type=read\land b.type=read \wedge a\sessionorder b \\
 \textsc{MonotonicReads} & \eqdef & 
 \forall a. \forall b.\forall c.  
(a\xrightarrow{vis}b\land b\xrightarrow{sorr} c)\Rightarrow a\xrightarrow{vis}c
\end{array}
$$

\subsubsection{Read your writes} ensures that if a \textsl{write} and then a \textsl{read} are performed by the same process, the \textsl{write} will be visible to the \textsl{read}.
\begin{gather*}
\textsc{ReadYourWrites} := \forall	a.\forall b.\Big( a.type=write\land b.type=read\Rightarrow ( a\xrightarrow{so}b \Rightarrow a\xrightarrow{vis}b  ) \Big)
\end{gather*}

 % \textsl{Eventual consistency} guarantess that all correct operations are eventually applied to all correct replicas, and that all operations return. Moreover, all correct replicas that had the same write operations performed will eventually reach an \say{equivalent state}. An operation or a replica is \textsl{correct} if it does not fail. To define eventual consistency, the \textsc{NoCircularCausality} consistency model must be defined. In \cite{ViottiVukolic}, it is defined as \mbox{$acyclic(hb)$}, where $hb := (\xrightarrow{so} \cup \vis)^+$ represents the \textsl{happens-before} relation, and $\xrightarrow{so}$ denotes the \textsl{session order} relation. Specifically, we have $\opa \xrightarrow{so} \opb$ if and only if $\opa \ssrel \opb \land \opa \rb \opb$. The relation $hb$ is MSO-expressible, as it is a transitive closure. Consequently, $acyclic(hb)$ is also MSO-expressible, since it is a restriction of $hb$. \textsl{Eventual consistency} is defined in \cite{ViottiVukolic} as the logical conjunction \mbox{$\textsc{EventualVisibility}\land\textsc{NoCircularCausality}\land\textsc{RVal}$}. However, since $\textsc{EventualVisibility}$ is not relevant in our framework, it is defined as follows.

% \begin{center}
% $\textsc{EventualConsistency} := \textsc{NoCircularCausality}\land\textsc{RVal}$
% \end{center}

% EXEMPLE: ou pas eventual consistency (autre nom??) + expliquer no circular causality

\section{Omitted Proofs}\label{app:proofs}
\setcounter{lemma}{0}
\setcounter{proposition}{0}
\begin{lemma}
Let $G=(H,\rightarrow)\in\algoneset_m,m\in\mathbb{N}$. Then, for all $\opa\in H$, \mbox{$deg_{out}(\opa) \leq m$}.
\end{lemma}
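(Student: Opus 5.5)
The plan is to read the bound off the construction in $\algone$: edges are only ever added in one place, and each addition is targeted at a direct successor.

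First, I observe that $E$ starts empty. The only instruction that modifies it is the insertion of a pair $(\opa_i,\opb)$ with $\opb\in\succs(\opa_i)$. This instruction sits inside the \textbf{for} loop of the iteration of the \textbf{while} loop devoted to $\opa_i$. I read the pseudo-code's $\opb_m$ as the loop element $\opb_j$; under either reading the added target belongs to $S=\textsl{ord}(\succs(\opa_i))$. Since the index $i$ strictly decreases, each vertex $\opa$ is the current $\opa_i$ in at most one iteration. Consequently, every edge $\opa\rightarrow\opb$ of $G$ is created during that single iteration and satisfies $\opb\in\succs(\opa)$. Because $E$ is a set, no edge is counted twice. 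It follows that $deg_{out}(\opa)\le|\succs(\opa)|$.

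Second, I bound $|\succs(\opa)|\le m$. By Definition~\ref{def:successor}, $\opa\dsucp{p}\opb$ requires $\opb$ to be \emph{the first} operation on process $p$ with $\opa\rb\opb$. The operations of a history on a fixed process are pairwise sequential, and all extremities are pairwise distinct (Definition~\ref{def:history}), so their start times are totally ordered. Hence for each $p$ there is at most one such $\opb$, and
\[
\succs(\opa)=\bigcup_{p\in\mathbb{P}}\{\opb\mid\opa\dsucp{p}\opb\}
\]
has cardinality at most $|\mathbb{P}|=m$. Combining the two steps gives $deg_{out}(\opa)\le m$.

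There is no real obstacle. The only point needing care is the bookkeeping in the first step: one must make sure that no other line of the algorithm adds edges out of $\opa$. In particular, a later iteration handling some $\opa_{i'}$ adds only edges whose source is $\opa_{i'}$. The loop bound also starts at $n-1$, so $\opa_n$ gets no out-edges, which only helps the bound.
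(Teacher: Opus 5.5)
Your proof is correct and follows the same route as the paper's. Out-edges of $a$ are created only in the single iteration that processes $a$, each one targets an element of $\succs(a)$, and there is at most one direct successor per process, which gives the bound $m$. You are more careful than the paper in two places: you justify uniqueness of the successor on each process via sequentiality and distinct extremities, and you note that the $b_m$ typo in the pseudo-code does not affect the bound.
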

\begin{proof}
In the algorithm, line 5 marks the beginning of the iteration over all operations in $\textsl{ord}(H)$.
Let $\opa$ denote the operation currently being processed in the algorithm. Between lines 7 and 12, the algorithm may add up to $m$ out-going edges from $\opa$, since by definition $\sigma(\opa)\leq |\mathbb{P}|=m$.

As each vertex has its out-going edges built as above, all vertices of $G$ have an out-going degree of at most $m$.

\end{proof}

%%%%%%%%%%%%%%%%%%%%%%%%%%%%%%%%%

\begin{proposition}
Let $H\in\historyset$ be a history. For $\opa,\opb \in H$,

\begin{center}
$\opa\rb\opb \Leftrightarrow $ there exists a tuple $(\oph_1,...,\oph_q),\; \oph_i \in H, q\in\mathbb{N}$ such that 
$\opa \dsucp{p} \oph_1 \dsucp{p} \hdots \dsucp{p} \oph_q \dsucp{p} \opb$ with  $p:=\opb.proc$.
\end{center}
\end{proposition}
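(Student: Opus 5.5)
The proof goes by proving the two directions separately, using only Definition~\ref{def:successor} specialised to the relation $\rb$, transitivity of $\rb$, and finiteness of $H$. I read the chain so that $q=0$ means simply $\opa\dsucp{p}\opb$.

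\emph{($\Leftarrow$).} By definition, $x\dsucp{p}y$ implies $x\rb y$. Applying this to each link of the chain $\opa \dsucp{p} \oph_1 \dsucp{p} \cdots \dsucp{p} \oph_q \dsucp{p} \opb$ gives a chain of $\rb$-steps. Transitivity of $\rb$ (strict inequalities between timestamps compose, since $x.stime<x.rtime$) then gives $\opa\rb\opb$.

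\emph{($\Rightarrow$).} Assume $\opa\rb\opb$ and set $p:=\opb.proc$. Consider $S=\{c\in H\mid c.proc=p,\ \opa\rb c\}$. This set is nonempty because it contains $\opb$. Its elements are pairwise sequential, since they lie on the same process (Definition~\ref{def:history}). So $S$ is totally ordered by start time, and this order agrees with $\rb$ on $S$. As $H$ is finite, $S$ has a first element $\oph_1'$, and by definition $\opa\dsucp{p}\oph_1'$. Now enumerate the elements of $S$ that start no later than $\opb$, in start order: $\oph_1'=c_1,c_2,\dots,c_r=\opb$. I claim $c_i\dsucp{p}c_{i+1}$ for each $i$. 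First, $c_i\rb c_{i+1}$ because they are distinct sequential operations on the same process. Second, any $d$ on $p$ with $c_i\rb d\rb c_{i+1}$ satisfies $\opa\rb c_i\rb d$, so $d\in S$ and $d$ lies strictly between $c_i$ and $c_{i+1}$ in start order. This contradicts consecutiveness, so $c_{i+1}$ is the first operation on $p$ with $c_i\rb c_{i+1}$. Taking $\oph_j=c_j$ for $j<r$ (with $q=r-1$) yields the required chain.

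The main point requiring care is the second step of the claim: showing that the next operation of $S$ is the \emph{first} $\rb$-successor on $p$ of $c_i$ among \emph{all} operations of $p$, not just those in $S$. This is exactly where we use that any $\rb$-successor of $c_i$ is automatically an $\rb$-successor of $\opa$, together with the fact that on a single process the $\rb$ order coincides with start order. The remaining degenerate case, $\opb$ being itself the first element of $S$, gives $q=0$.
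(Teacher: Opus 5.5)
Your proof is correct and follows the same route as the paper. For ($\Leftarrow$) you chain the $\rb$-links by transitivity, and for ($\Rightarrow$) you collect, in start order, the operations on $p=\opb.proc$ that start after $\opa$ returns, up to and including $\opb$. You also justify a step the paper only asserts: that consecutive collected operations are direct successors with respect to \emph{all} operations on $p$, not just the collected ones.
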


\begin{proof}\slshape
$\Rightarrow)$

Let $\opa,\opb \in H$ with $\opa\rb\opb$, and let $T$ denote the tuple we aim to construct. We distinguish two main cases.
\begin{itemize}
	\item[$\ast$] if $\opb$ is a direct successor of $\opa$; that is, there exists $p \in \mathbb{P}$ such that $\opa\dsucp{p}\opb$, then we set $T=()$
	\item[$\ast$]  if $\opb$ is not a direct successor of $\opa$, we proceed as follows.
\end{itemize}

Let $p=\opb.proc$. Having $\opa\rb\opb$ implies that on process $p$ there are none or some operations executed after $\opa$ returns and before $\opb$ starts. We build $T$ by collecting all of those successive operations, starting with $\oph_1$ such that \mbox{$\opa\dsucp{p}\oph_1$}, and then $\oph_2,...,\oph_q$ with 
\mbox{$\opa\dsucp{p} \oph_1 \dsucp{p} \oph_2 \dsucp{p} ... \dsucp{p} \oph_q \dsucp{p} \opb$}. We set $T=(\oph_1,\oph_2,...,\oph_q)$.

$\Leftarrow)$

Let $\opa,\oph_1,...,\oph_q,\opb \in H$ with 
\mbox{$\opa \dsucp{p} \oph_1 \dsucp{p} ... \dsucp{p} \oph_q \dsucp{p} \opb$} and  $p:=\opb.proc$. By definition of $\dsucp{p}$, we have for all \mbox{$1 \leq i \leq q$}, $\oph_i.rtime<\oph_{i+1}.stime$ as well as \mbox{$\opa.rtime<h_1.stime$} and \mbox{$\oph_q.rtime<\opb.stime$}. We recall that for all \mbox{$1 \leq i \leq q$}, $h_i.stime<h_i.rtime$. By the transitivity of $<$, we have $\opa.rtime<\opb.stime$ which means $\opa \rb\opb$ by definition.

\end{proof}

%%%%%%%%%%%%%%%%%%%%%%%%%%%%%%%%%%%%%%%%%%%%%%%%
\begin{proposition}
Let $G=(H,\rightarrow)\in\algoneset_m,m\in\mathbb{N}$. For \mbox{$\opa,\opb \in H$},
\begin{center}
$\opa\rb\opb \Leftrightarrow \opa\patharrow\opb$.
\end{center}
\end{proposition}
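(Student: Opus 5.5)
We prove the two directions separately; the easy one is $\Leftarrow$. Every edge $(\opa_i,\opb)$ added by the algorithm has $\opb\in\succs(\opa_i)$, hence $\opa_i\dsuc\opb$ and in particular $\opa_i\rb\opb$. A path $\opa\patharrow\opb$ is therefore a finite chain of $\rb$-steps, and transitivity of $\rb$ gives $\opa\rb\opb$.

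For $\Rightarrow$ we first prove a local invariant: for every $\opa\in H$ and every $\opb\in\succs(\opa)$, we have $\opa\patharrow\opb$ in the final graph. When the outer loop processes $\opa=\opa_i$, it examines each $\opb\in\succs(\opa_i)$ (reading the test on line 7 as concerning the current successor $\opb_j$). Either a path from $\opa_i$ to $\opb_j$ already exists, or the edge $(\opa_i,\opb_j)$ is added. The algorithm never removes edges, so such a path persists until termination. One boundary point needs care. The loop starts at $i=n-1$ and stops at $i>0$, so depending on indexing the first and last elements of $\textsl{ord}(H)$ may not be processed. The last operation in $\textsl{ord}$ has no successors, since any $\rb$-successor starts later, so skipping it is harmless. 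I will read the loop as covering every index with a nonempty successor set; if the indexing genuinely skips $\opa_1$, the statement would fail, so I take this as the intended semantics.

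Now let $\opa\rb\opb$ and set $p=\opb.proc$. By Proposition~\ref{prop:finishesBeforeCollection} there is a chain $\opa\dsucp{p}\oph_1\dsucp{p}\cdots\dsucp{p}\oph_q\dsucp{p}\opb$. Each consecutive pair in this chain is a direct-successor pair, so the invariant gives a path for each link. Concatenating these paths by transitivity of $\patharrow$ yields $\opa\patharrow\opb$.

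The main obstacle is making the invariant rigorous against the exact algorithm text. Two points matter: the loop bounds, and the fact that the path existence checked on line 7 refers to the graph at that moment, which only grows. Monotonicity of $E$ settles the second point. The first reduces to the observation that operations left unprocessed have no direct successors, or else to a harmless correction of the indexing. No induction over the reverse traversal order is needed, because the path for each successor pair is either pre-existing or created directly.
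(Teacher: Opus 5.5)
Your proof is correct and follows the paper's. For $\Rightarrow$, both use the chain from Proposition~\ref{prop:finishesBeforeCollection} together with the fact that every direct-successor pair ends up joined by a path (lines 7--8, reading $\opb_m$ as $\opb_j$, and edges are never removed); for $\Leftarrow$, both use that every added edge is a direct-successor step, hence a $\rb$ step.

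Your $\Leftarrow$ is actually cleaner than the paper's. The paper routes back through Proposition~\ref{prop:finishesBeforeCollection} by asserting that every edge on the path is a $\dsucp{p}$-step with $p=\opb.proc$, which is false when intermediate vertices lie on other processes; applying transitivity of $\rb$ directly, as you do, avoids this. Your loop-bound worry is also harmless: with the paper's 1-based indexing only $\opa_n$ is skipped, and it has no successors.
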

\begin{proof}\slshape
$\Rightarrow)$

Let $G=(H,\rightarrow)\in\algoneset_m$, and $\opa,\opb \in H$ such that $\opa\rb\opb$. By Proposition~\ref{prop:finishesBeforeCollection}, there exists a tuple \mbox{$T=(\oph_1,...,\oph_q),\; \oph_i \in H, q\in\mathbb{N}$} such that 
\begin{center}
\mbox{$\opa\dsucp{p} \oph \dsucp{p} \oph_2 \dsucp{p} ... \dsucp{p} \oph_q \dsucp{p} \opb$}

\end{center}
with  $p=\opb.proc$. 

Lines 7 and 8 of the algorithm ensure that a path always exists from any operation to its direct successor on a given process, if such a successor exists. And since $\oph_i \dsucp{p} \oph_{i+1}$  is defined to mean that  $\oph_{i+1}\in\sigma(\oph_i)$, we have

\begin{center}
$\opa \patharrow \oph_1 \patharrow \oph_2 \patharrow ... \patharrow \oph_{q-1} \patharrow \oph_q \patharrow \opb$.
\end{center}

Thus, $\opa\patharrow\opb$ by transitivity of $\patharrow$.

$\Leftarrow)$

We have $\opa\patharrow\opb$. Let the $\oph_1,...,\oph_q$ be the vertices on the path from $\opa$ to $\opb$ such that

\begin{center}
$
\opa \rightarrow \oph_1 \rightarrow \oph_2 \rightarrow ... \rightarrow \oph_{q-1} \rightarrow \oph_q \rightarrow \opb
$
\end{center}

%By Observation~\ref{obs:directPREDandSUCC}, with  $p=\opb.proc$,

By construction, $\forall \opc,\opd\in H:\opc\rightarrow\opd\Rightarrow \opc\dsucp{\opd.proc}\opd$. Thus, for $p=\opb.proc$,

\begin{center}
$
\opa \rightarrow \oph_1 \rightarrow \oph_2 \rightarrow ... \rightarrow \oph_{q-1} \rightarrow \oph_q \rightarrow \opb
 \quad\Rightarrow\quad$
\mbox{$\opa \dsucp{p} \oph_1 \dsucp{p} ... \dsucp{p} \oph_q \dsucp{p} \opb$} 
\end{center}

$\Rightarrow \opa \rb \opb$ by Proposition~\ref{prop:finishesBeforeCollection}.
\end{proof}

%%%%%%%%%%%%%%%%%%%%%%%%%%%%%%%%%%%%

\begin{lemma}
Let $G=(H,\rightarrow)\in\algoneset_m,m\in\mathbb{N}$. Then, for all $\opa\in H$, \mbox{$deg_{in}(\opa) \leq m$}.
\end{lemma}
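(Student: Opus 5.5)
The plan is to show something slightly stronger: for every operation $\opb$ and every process $q\in\mathbb{P}$, at most one operation located on $q$ has an edge into $\opb$. Since there are $m$ processes, this gives $deg_{in}(\opb)\leq m$.

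Fix $\opb$ and $q$. Suppose for contradiction that $\opa\rightarrow\opb$ and $\opa'\rightarrow\opb$ with $\opa\neq\opa'$ and $\opa.proc=\opa'.proc=q$. By Definition~\ref{def:history}, $\opa$ and $\opa'$ are sequential. Without loss of generality $\opa\rb\opa'$, so $\opa'$ comes after $\opa$ in $\textsl{ord}(H)$, and the algorithm processes $\opa'$ before $\opa$.

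By construction (as used in the proof of Proposition~\ref{prop:looparrowPATH}), every edge $\opc\rightarrow\opd$ satisfies $\opc\dsucp{\opd.proc}\opd$. Hence $\opa'\rb\opb$, and in particular $\opa'.rtime<\opb.stime$.

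The key step is to show that when the algorithm processes $\opa$ and reaches the successor $\opb$, a path $\opa\patharrow\opb$ already exists, so the edge $(\opa,\opb)$ is never added. Let $\oph$ be the direct successor of $\opa$ on process $q$, i.e.\ $\opa\dsucp{q}\oph$. It exists because $\opa'$ is on $q$ and $\opa\rb\opa'$. Then $\oph.stime\leq\opa'.stime<\opa'.rtime<\opb.stime$, so $\oph$ precedes $\opb$ in $\textsl{ord}(\succs(\opa))$. Therefore, when $\opb$ is examined during $\opa$'s iteration, $\oph$ has already been examined: either the edge $\opa\rightarrow\oph$ was added, or a path $\opa\patharrow\oph$ already existed. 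If $\oph=\opa'$, we get $\opa\patharrow\opa'\rightarrow\opb$, since the edge $\opa'\rightarrow\opb$ was added during $\opa'$'s earlier iteration. Otherwise $\oph\rb\opa'$, both lie on $q$, and both come after $\opa$ in $\textsl{ord}$.

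This last case is the main obstacle. We need a path $\oph\patharrow\opa'$ that exists already at the time $\opa$ is processed, whereas Proposition~\ref{prop:looparrowPATH} only speaks about the final graph. I would establish an invariant by backward induction on $i$: after the iteration for $\opa_i$ is completed, for all $j\geq i$ and every $\opd$ with $\opa_j\rb\opd$, there is a path $\opa_j\patharrow\opd$ in the current graph. The induction step reuses the argument of Proposition~\ref{prop:looparrowPATH}. For $\opa_i\rb\opd$, take the chain of direct successors on $\opd.proc$ given by Proposition~\ref{prop:finishesBeforeCollection}. The first link $\opa_i\dsucp{}\oph_1$ is covered during iteration $i$. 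The rest of the chain starts at $\oph_1$, which comes later in $\textsl{ord}$ since it starts after $\opa_i$ returns, so it is covered by the induction hypothesis. Edges are never removed, so paths persist.

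Applying the invariant to $\oph$ gives $\opa\patharrow\oph\patharrow\opa'\rightarrow\opb$ at the moment $\opb$ is examined. The test $\neg(\opa\patharrow\opb)$ therefore fails, and the edge $(\opa,\opb)$ is not added. This contradicts $\opa\rightarrow\opb$, which concludes the proof.
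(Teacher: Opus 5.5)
Your proof is correct and follows the same route as the paper's. Both show at most one in-edge per process, and the paper's $f$ (the direct successor of the earlier same-process source) plays the role of your $\oph$, so that $\opa \patharrow \oph \patharrow \opa' \rightarrow \opb$ blocks the edge. Your version is more careful: the paper only asserts these paths exist ``upon completion'' of $\opa$'s processing and implicitly relies on the final-graph Proposition~\ref{prop:looparrowPATH}, whereas your observation that $\oph$ precedes $\opb$ in $\textsl{ord}(\succs(\opa))$ and your backward-induction invariant on intermediate graphs supply the timing argument the paper omits.
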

\begin{proof} 

We proceed by showing that each vertex has at most one incoming edge from each process. Formally, we aim to prove that for all $\opc \in H$, no $\opa,\opb\in H$ exist such that $\opa\ssrel\opb\land\opa\rightarrow\opc \land \opb\rightarrow\opc$~. We proceed by contradiction. Suppose that there exists $\opa,\opb\in H$ such that $\opa\rightarrow\opc\land\opb\rightarrow\opc\land\opa\ssrel\opb$. As $\opa$ and $\opb$ are executed on the same process, we either have $\opa \rb \opb$ or $\opb \rb \opa$. Let $\opa \rb \opb$ without loss of generality. We also have $\opa \rb \opc$ and $\opb \rb \opc$ as $\opc$ starts after both $\opa$ and $\opb$ finish by definition of $\rightarrow$. Our algorithm traverses $\textsl{ord}(H)$ in reverse order; meaning that our operations will be treated in the following order: first $\opc$, then $\opb$ and finally $\opa$. Between each, other operations may be processed, though not impacting our reasoning, so we will ignore them. Let $p$ be the process on which $\opa$ and $\opb$ are executed. Here is how our algorithm will proceed:
\begin{itemize}
	\item[$\ast$] Processing operation $\opc$: adding the needed out-going edges (will not matter in our proof as those edges are towards operations that occured after $\opc$ finished, and we are focused of $\opa$ and $\opb$ that occur before $\opc$)
	\item[$\ast$] Processing operation $\opb$: for the moment there are no out-going edges from $\opb$. We supposed the existence of the edge $\opb\rightarrow\opc$ which we therefore add to the graph, and which implies $\opb\rb\opc$
	\item[$\ast$] Processing operation $\opa$: as we supposed $\opa\rightarrow\opc$ to be at some point an edge of the graph we have by construction that $\opc$ is a direct successor of $\opa$. Let $\opf \in H$ be the operation such that $\opa \dsucp{p} \opf$; then either $\opf\rb\opb$ or $\opf=\opb$. In either case, the proof holds. Upon the completion of the processing of operation $\opa$, we have $\opa \patharrow \opf$ as well as $\opf \patharrow \opb$. Since $\opb\rightarrow\opc$ implies $\opb \patharrow \opc$, by transitivity of $\patharrow$ we have $\opa \patharrow \opc$. Since there already exists a path from $\opa$ to $\opc$, the edge $\opa\rightarrow\opc$ will not be added during the processing of $\opa$.
	
Hence the contradiction.
\end{itemize}
We have thus proven that each vertex has at most one in-coming edge from each process. Consequently, across all processes, each vertex can have at most $m$ in-coming edges.
\end{proof}

%%%%%%%%%%%%%%%%%%%%%%%%%%%%%%%%%%%%

\begin{lemma}
Let a history $H\in\historyset$ and a graph \mbox{$G=(H,\rightarrow)\in\algone(H)$}. For a cut 
\mbox{$(L,R)=\textsl{Cut}_\ell(\textsl{ord},G)$}, with \mbox{$1\leq\ell\leq|H|$}, there is no edge that crosses the cut $\ell$ from $R$ to $L$. That is, $\nexists\; \opa,\opb\in H:\; \opa\rightarrow\opb\;\land\;\opa\in R\;\land\;\opb\in L$.
\end{lemma}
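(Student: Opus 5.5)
The plan is to show that every edge of $G$ goes forward in the ordering $\textsl{ord}(H)$, i.e. if $\opa\rightarrow\opb$ then $\opa.stime<\opb.stime$. Since $L$ is an initial segment of $\textsl{ord}(H)$ and $R$ the complementary final segment, an edge from $R$ to $L$ would then go from an operation with a larger start time to one with a smaller start time, which is impossible.

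First, by inspecting the algorithm, every edge $(\opa_i,\opb_j)$ added to $E$ has $\opb_j\in\succs(\opa_i)$. This is the fact already used in the proof of Proposition~\ref{prop:looparrowPATH}: if $\opc\rightarrow\opd$ then $\opc\dsucp{\opd.proc}\opd$. By Definition~\ref{def:successor}, $\opa\dsucp{p}\opb$ entails $\opa\rb\opb$, that is $\opa.rtime<\opb.stime$.

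Next, we combine this with $\opa.stime<\opa.rtime$ (an assumption on every operation) to obtain $\opa.stime<\opb.stime$. This inequality also rules out $\opa=\opb$. Because $\textsl{ord}(H)$ enumerates operations by strictly increasing start times, which are pairwise distinct by extremity-disjointness, $\opa$ occurs strictly before $\opb$ in $\textsl{ord}(H)$.

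Finally, suppose toward a contradiction that $\opa\in R$ and $\opb\in L$. With $(L,R)=\textsl{Cut}_\ell(\textsl{ord},G)$, we have $\opb=\opa_j$ for some $j\leq\ell$ and $\opa=\opa_i$ for some $i>\ell$. Hence $\opb$ precedes $\opa$ in the ordering, which contradicts the previous step.

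There is no real obstacle here. The only point needing care is the bookkeeping around the algorithm: one should note that the pair added on line 8 consists of $\opa_i$ and a member of $\textsl{ord}(\succs(\opa_i))$ (the index written as $\opb_m$ is intended as $\opb_j$). Whatever that index is, the added target is always a direct successor, so the argument is unaffected.
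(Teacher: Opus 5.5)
Your proposal is correct and follows essentially the same argument as the paper. Both rely on the fact that every edge targets a direct successor, so $a.rtime<b.stime$; combined with $a.stime<a.rtime$, this contradicts $b.stime\le a.stime$, which is what $a\in R$ and $b\in L$ force. Your version simply packages the same inequality chain as the observation that edges always go forward in $\textsl{ord}(H)$.
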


\begin{proof}
We proceed by contradiction. Suppose there is such an edge $ \opa\rightarrow\opb$ with $\opa\in R$ and $\opb\in L$. The algorithm is designed so that any edge $\opc\rightarrow\opd$ implies $\opd\in\sigma(\opc)$, and therefore $\opc$ must return before $\opd$ starts. In particular, $\opa\rightarrow\opb$ implies $\opa.rtime<\opb.stime$.
However, by the definition of the cut $Cut_\ell(\textsl{ord},G)$, since $\opa\in R$ and $\opb\in L$, we also have \mbox{$\opb.stime\leq\opa.stime$}. Combining these inequalities, we obtain \mbox{$\opa.rtime<\opb.stime\leq\opa.stime$} which is a contradiction, as it violates the property that $\opa.stime<\opa.rtime$. Hence, such an edge cannot exist.
\end{proof}

%%%%%%%%%%%%%%%%%%%%%%%%%%%%%%%%%%%%

\begin{lemma}
Let a history $H\in\historyset$ and the graph \mbox{$G=(H,\rightarrow)\in\algone(H)$}. For a cut 
\mbox{$(L,R)=\textsl{Cut}_\ell(\textsl{ord},G)$}, with \mbox{$1\leq\ell\leq|H|$}, there is no edge that crosses the cut $\ell$ from $L_\ell(G)$ to $R_\ell(G)$. That is, \mbox{$\nexists\; \opa,\opb\in H:\; \opa\rightarrow\opb\;\land\;\opa\in L_\ell(G)\;\land\;\opb\in R_\ell(G)$}.
\end{lemma}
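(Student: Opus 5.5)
The plan is a direct contradiction argument. It uses only the fact, already invoked in the proofs of Proposition~\ref{prop:looparrowPATH} and Lemma~\ref{prop:crossRedges}, that every edge added by the algorithm goes to a direct successor: if $\opc\rightarrow\opd$ then $\opc\dsucp{q}\opd$ with $q=\opd.proc$. Suppose there is an edge $\opa\rightarrow\opb$ with $\opa\in L_\ell(G)$ and $\opb\in R_\ell(G)$. Let $p=\opa.proc$ and $q=\opb.proc$. I will show that some operation on $q$ that starts before $\opb$ already satisfies $\opa\rb\cdot$, which contradicts $\opb$ being the \emph{first} such operation on $q$.

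\emph{Step 1: a later operation on $\opa$'s process.} Since $\opa\in L$ but $\opa\notin\Gamma_\ell=\bigcup_{p'}\gamma_{p'}(L)$, $\opa$ is not the last operation of $L$ on $p$. So there is $\opa'\in L$ with $\opa'\ssrel\opa$ and $\opa'.stime>\opa.stime$. By condition (2) of Definition~\ref{def:history}, $\opa$ and $\opa'$ are sequential. Since $\opa$ starts first, this gives $\opa\rb\opa'$, i.e.\ $\opa.rtime<\opa'.stime$.

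\emph{Step 2: an earlier operation on $\opb$'s process.} Symmetrically, $\opb\in R\setminus\Lambda_\ell$, so $\opb$ is not the first operation of $R$ on $q$. Hence there is $\opb'\in R$ with $\opb'.proc=q$ and $\opb'.stime<\opb.stime$.

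\emph{Step 3: chain the inequalities.} The cut is taken along $\textsl{ord}$, which orders operations by start time. Since $\opa'\in L$ and $\opb'\in R$, we get $\opa'.stime<\opb'.stime$. Therefore
\[
\opa.rtime<\opa'.stime<\opb'.stime ,
\]
so $\opa\rb\opb'$. Now $\opb'$ lies on process $q$, satisfies $\opa\rb\opb'$, and starts strictly before $\opb$. This contradicts $\opa\dsucp{q}\opb$, which requires $\opb$ to be the first operation on $q$ with $\opa\rb\opb$. Hence no such edge exists.

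The main point to get right is Step 3, namely that membership in $L$ versus $R$ really transfers to strict start-time comparisons. This holds because $\textsl{ord}$ is strictly increasing in $stime$ (start times are pairwise distinct by extremity-disjointness). Notably, the argument does not use the path-check of the algorithm at all, only that edges are direct-successor edges.
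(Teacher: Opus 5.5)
Your proof is correct and follows essentially the same contradiction argument as the paper. Both take a later operation on $\opa$'s process in $L$ and an earlier operation on $\opb$'s process in $R$, obtain $\opa.rtime<\opa'.stime<\opb'.stime<\opb.stime$, and conclude that this contradicts $\opb$ being the direct successor of $\opa$ on its process. The only differences are that you use arbitrary such witnesses rather than the specific elements of $\Gamma_\ell$ and $\Lambda_\ell$, and that you justify each inequality (same-session sequentiality, start-time ordering of the cut), which the paper leaves implicit.
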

\begin{proof}
We proceed by contradiction. Let suppose the existence of $\opa\in L_\ell(G),\opb\in R_\ell(G)$ such that $\opa\rightarrow	\opb$, and let $p=\opa.proc$ and $q=\opb.proc$. 
As $\opa\notin \Gamma_\ell(G)$, there exists a vertex $\opc\in \Gamma_\ell(G)$ executed after $\opa$ on the same process $p$ . And as $\opb\notin \Lambda_\ell(G)$, there exists a vertex $\opd\in \Lambda_\ell(G)$ executed before $\opb$ on the same process $q$. By definition of $\Lambda_\ell(G)$ and $\Gamma_\ell(G)$ we have $\opa.rtime<\opc.stime<\opd.stime<\opb.stime$.
Having $\opa\rightarrow\opb$ implies that $\opb$ is a direct successor of $\opa$, meaning that $\opb$ is the first operation to occur on its process $q$ after $\opa$ finishes. But there exists $\opd$ on process $q$ which starts after $\opa$ finishes, and occurs before $\opb$. This implies that $\opb$ is not a direct successor of $\opa$. By construction of the algorithm, if $\opb$ is not a direct successor of $\opa$, then the edge $\opa\rightarrow\opb$ cannot exist. This leads to a contradiction.
Thus, there is no edge in $L_\ell(G)\times R_\ell(G)$ that crosses the cut~$\ell$.
\end{proof}

\end{document}